\newtheoremstyle{personal}
{5pt}
{5pt}
{}
{8pt}
{\scshape}
{:}
{.5em}
{}
\newtheorem{definition}{Definition}[section]
\newtheorem{lemma}[definition]{Lemma}
\newtheorem{theorem}[definition]{Theorem}
\newtheorem{example}[definition]{Example}
\newcommand*{\ci}{\!\perp\!\!\!\perp}
\newcommand*{\ncid}{\not\perp}
\newcommand*{\cid}{\perp}
\DeclareMathOperator*{\Pa}{Pa}
\DeclareMathOperator*{\CPDAG}{CPDAG}
\DeclareMathOperator*{\Skeleton}{Skeleton}
\DeclareMathOperator*{\DirectedPart}{DirPart}
\DeclareMathOperator*{\GES}{GES}
\DeclareMathOperator*{\AGES}{AGES}
\DeclareMathOperator*{\AggregateCPDAGs}{AggregateCPDAGs}
\DeclareMathOperator*{\rank}{rank}
\tikzstyle{place}=[circle,minimum width=15pt,draw=blue!50,fill=blue!20,thick]
\tikzstyle{transition}=[rectangle,minimum width=15pt,draw=black!50,fill=black!20,thick]
\tikzstyle{node}=[scale=.85,minimum width=15pt,draw=white,fill=white,thick]
\tikzstyle{snode}=[scale=.85,circle,minimum width=32pt,draw=black,fill=white,thick]
\tikzstyle{white}=[scale=.85,circle,minimum width=25pt,draw=white,fill=white,thick]
\title{Structure Learning of Linear Gaussian Structural Equation Models with Weak Edges}
\author{ {\bf Marco F. Eigenmann} \\
Seminar f\"ur Statistik\\
ETH Zurich\\
Zurich, Switzerland
\And
{\bf Preetam Nandy}  \\
Department of Biostatistics and Epidemiology        \\
University of Pennsylvania \\
Philadelphia, PA 19104, USA \\
\And
{\bf Marloes H. Maathuis}   \\
Seminar f\"ur Statistik \\
ETH Zurich    \\
Zurich, Switzerland\\
}
\begin{document}

\maketitle

\begin{abstract}
   We consider structure learning of linear Gaussian structural equation models with weak edges. Since the presence of weak edges can lead to a loss of edge orientations in the true underlying CPDAG, we define a new graphical object that can contain more edge orientations. We show that this object can be recovered from observational data under a type of strong faithfulness assumption. We present a new algorithm for this purpose, called aggregated greedy equivalence search (AGES), that aggregates the solution path of the greedy equivalence search (GES) algorithm for varying values of the penalty parameter. We prove consistency of AGES and demonstrate its performance in a simulation study and on single cell data from \cite{Sachs}. The algorithm will be made available in the R-package \texttt{pcalg}.
\end{abstract}

\section{INTRODUCTION}

We consider structure learning of linear Gaussian structural equation models (SEMs) \citep{Bollen}. A linear SEM is a set of equations of the form $X = B^T X + \varepsilon$, where $X = (X_1,\dots,X_p)^T$, $B$ is a $p\times p$ strictly upper triangular matrix, $\varepsilon= (\varepsilon_1,\dots,\varepsilon_p)^T$, and $\varepsilon$ is multivariate Gaussian with mean vector zero and a diagonal covariance matrix $D$ (hence assuming no hidden confounders). Such SEMs can be represented by a directed acyclic graph (DAG) $G$, where a nonzero entry $B_{ij}$ corresponds to an edge from $X_i$ to $X_j$. By putting the coefficients $B_{ij}$ along the corresponding edges, one obtains a weighted graph. This weighted graph and the distribution of $\varepsilon$ fully determine the distribution of $X$.
Example~\ref{Example: Motivation} shows a simple instance with $p=3$, where
$$B = \begin{pmatrix}
   0 & 0.1& 1 \\
   0 & 0  & 1\\
   0 & 0  & 0
\end{pmatrix}.$$ The weighted DAG is shown in Figure~\ref{Subfigure Motivating example DAG}.

Based on $n$ i.i.d.\ observations from $X$, we aim to learn the underlying DAG $G$. However, since $G$ is generally not identifiable from the distribution of $X$, we learn the so-called Markov equivalence class of $G$, which can be represented by a completed partially directed acyclic graph (CPDAG) (see Section~\ref{Subsection Terminology graphical models}). A CPDAG can contain both directed and undirected edges, where undirected edges represent uncertainty about the edge orientation.

Several efficient algorithms have been developed to learn CPDAGs, such as for example the PC algorithm \citep{Spirtes} and the greedy equivalence search algorithm (GES) \citep{ChickeringGES}. These algorithms have been proved to be sound and consistent \citep{Spirtes, Kalisch, ChickeringGES, Nandy}.

Example~\ref{Example: Motivation} illustrates a somewhat counter-intuitive behaviour of these algorithms for varying sample size.

\begin{example}\label{Example: Motivation}
   Consider the following SEM:
   \begin{align*}
   X_1 &= \varepsilon_{1}\\
   X_2 &= 0.1\cdot X_1 + \varepsilon_{2}\\
   X_3 &= X_1 + X_2 + \varepsilon_{3},
   \end{align*}
   where $\varepsilon \sim N(0,I)$.
   The corresponding CPDAG is the complete undirected graph in Figure~\ref{Subfigure Motivating example true CPDAG}.
   When running PC or GES with a very large sample size, the algorithms will output this CPDAG with high probability. For a smaller sample size, however, the algorithms are likely to miss the weak edge $X_{1}-X_2$, leading to the CPDAG in Figure~\ref{Subfigure Motivating example smaller CPDAG}. Note that the latter CPDAG contains two edge orientations that are identical to the orientations in the underlying DAG $G_0$. Thus, both CPDAGs in Figures~\ref{Subfigure Motivating example true CPDAG} and \ref{Subfigure Motivating example smaller CPDAG} contain some relevant information, one in terms of correct adjacencies and one in terms of correct edge orientations. For this example, GES outputs Figure~\ref{Subfigure Motivating example smaller CPDAG} for a sample size smaller than $100$ and Figure~\ref{Subfigure Motivating example true CPDAG} for a sample size larger than 1000, with high probabilities.
\end{example}

\begin{figure}[t]
   \centering
   \hspace{0.7cm}
   \subfloat[True DAG $G_0$.]{
   \label{Subfigure Motivating example DAG}
   \begin{tikzpicture}[scale=0.7]
      \node[node] (X) at (-1.4,1) {$X_1$};
      \node[node] (Y) at (1.4,1) {$X_2$};
      \node[node] (Z) at (0,-1) {$X_3$};
      \draw[thick,->] (X) to [above] node{$0.1$} (Y);
      \draw[thick,->] (X) to [left] node{$1$} (Z);
      \draw[thick,->] (Y) to [right] node{$1$} (Z);
   \end{tikzpicture}
   }
   \hfill
   \subfloat[CPDAG of $G_0$.]{
   \label{Subfigure Motivating example true CPDAG}
   \begin{tikzpicture}[scale=0.7]
      \node[node] (X) at (-1.4,1) {$X_1$};
      \node[node] (Y) at (1.4,1) {$X_2$};
      \node[node] (Z) at (0,-1) {$X_3$};
      \draw[-] (X) to [above]  (Y);
      \draw[-] (X) to [left]  (Z);
      \draw[-] (Y) to [right]  (Z);
   \end{tikzpicture}
   }
   \hspace{0.7cm}
   \vfill
   \hspace{0.7cm}
   \subfloat[CPDAG without the weak edge.]{
   \label{Subfigure Motivating example smaller CPDAG}
   \begin{tikzpicture}[scale=0.7]
      \node[node] (X) at (-1.4,1) {$X_1$};
      \node[node] (Y) at (1.4,1) {$X_2$};
      \node[node] (Z) at (0,-1) {$X_3$};
      \draw[thick,->] (X) to [left]  (Z);
      \draw[thick,->] (Y) to [right]  (Z);
   \end{tikzpicture}
   }
   \hfill
   \subfloat[Desired APDAG.]{
   \label{Subfigure Motivating example APDAG}
   \begin{tikzpicture}[scale=0.7]
      \node[node] (X) at (-1.4,1) {$X_1$};
      \node[node] (Y) at (1.4,1) {$X_2$};
      \node[node] (Z) at (0,-1) {$X_3$};
      \draw[-] (X) to [above]  (Y);
      \draw[thick,->] (X) to [left]  (Z);
      \draw[thick,->] (Y) to [right] (Z);
   \end{tikzpicture}
   }
   \hspace{0.7cm}
   \hfill
   \caption{A simple case where the inclusion of a weak edge leads to a loss of edge orientations (see Example~\ref{Example: Motivation}).}
   \label{Figure Motivating Example1}
\end{figure}
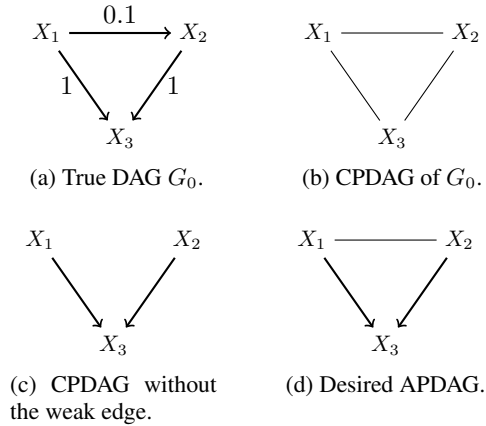

One may think that a simple solution to the above problem is to omit weak edges either by using a strong penalty on model complexity or by truncating edges with small weights. In some cases, however, the inclusion of weak edges can also help to obtain edge orientations. This is illustrated in Example~\ref{Example: Weak helps}.

\begin{example}\label{Example: Weak helps}
    Consider the weighted DAG in Figure~\ref{Subfigure Weak helps weighted DAG} with $\varepsilon\sim N(0,I)$.
    Figure~\ref{Subfigure Weak helps DAG} represents the corresponding CPDAG, which is fully oriented. For large sample sizes, PC and GES will output this CPDAG with high probability. For smaller sample sizes, however, they are likely to miss the weak edge $X_4 \rightarrow X_2$, leading to the CPDAG in Figure~\ref{Subfigure Weak helps smaller CPDAG}, which is fully undirected.
\end{example}

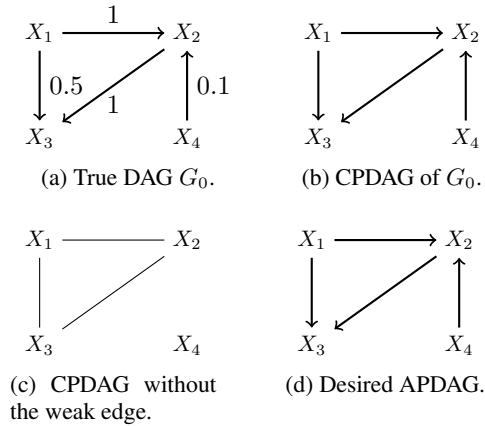
\begin{figure}[t]
    \centering
    \hspace{0.7cm}
    \subfloat[True DAG $G_0$.]{
    \label{Subfigure Weak helps weighted DAG}
    \begin{tikzpicture}[scale=0.7]
       \node[node] (X) at (-1.4,1) {$X_1$};
       \node[node] (Y) at (1.4,1) {$X_2$};
       \node[node] (Z) at (-1.4,-1) {$X_3$};
       \node[node] (W) at (1.4,-1) {$X_4$};
       \draw[thick,->] (X) to [above] node{$1$} (Y);
       \draw[thick,->] (X) to [right] node{$0.5$} (Z);
       \draw[thick,->] (Y) to [below] node{$1$} (Z);
       \draw[thick,->] (W) to [right] node{$0.1$} (Y);
    \end{tikzpicture}
    }
    \hfill
    \subfloat[CPDAG of $G_0$.]{
    \label{Subfigure Weak helps DAG}
    \begin{tikzpicture}[scale=0.7]
       \node[node] (X) at (-1.4,1) {$X_1$};
       \node[node] (Y) at (1.4,1) {$X_2$};
       \node[node] (Z) at (-1.4,-1) {$X_3$};
       \node[node] (W) at (1.4,-1) {$X_4$};
       \draw[thick,->] (X) to [above]  (Y);
       \draw[thick,->] (X) to [left]  (Z);
       \draw[thick,->] (Y) to [right]  (Z);
       \draw[thick,->] (W) to [right] (Y);
    \end{tikzpicture}
    }
    \hspace{0.7cm}
    \vfill
    \hspace{0.7cm}
    \subfloat[CPDAG without the weak edge.]{
    \label{Subfigure Weak helps smaller CPDAG}
    \begin{tikzpicture}[scale=0.7]
       \node[node] (X) at (-1.4,1) {$X_1$};
       \node[node] (Y) at (1.4,1) {$X_2$};
       \node[node] (Z) at (-1.4,-1) {$X_3$};
       \node[node] (W) at (1.4,-1) {$X_4$};
       \draw[-] (X) to [above]  (Y);
       \draw[-] (X) to [left]  (Z);
       \draw[-] (Y) to [right] (Z);
    \end{tikzpicture}
    }
    \hfill
    \subfloat[Desired APDAG.]{
    \label{Subfigure Weak helps APDAG}
    \begin{tikzpicture}[scale=0.7]
       \node[node] (X) at (-1.4,1) {$X_1$};
       \node[node] (Y) at (1.4,1) {$X_2$};
       \node[node] (Z) at (-1.4,-1) {$X_3$};
       \node[node] (W) at (1.4,-1) {$X_4$};
       \draw[thick,->] (X) to [above] (Y);
       \draw[thick,->] (X) to [left] (Z);
       \draw[thick,->] (Y) to [right]  (Z);
       \draw[thick,->] (W) to [right] (Y);
    \end{tikzpicture}
    }
    \hspace{0.7cm}
    \hfill
    \caption{A simple case where the inclusion of a weak edge helps to obtain edge orientations (see Example~\ref{Example: Weak helps}).}
    \label{Figure Motivating Example2}
\end{figure}

With a larger sample size we expect to gain more insight into a system, and the fact that we can lose correct edge orientations is undesirable. In the extreme case of a complete DAG with many weak edges, a small sample size yields informative output in terms of certain edge orientations, while a large sample size yields the asymptotically correct CPDAG, which is the uninformative complete undirected graph. This problem is relevant in practice for situations where the underlying system contains many weak effects and the sample size can be very large.

We propose a solution for this problem by defining a new graphical target object that can contain more edge orientations than the CPDAG. This object is a partially directed acyclic graph (PDAG) obtained by aggregating several CPDAGs of sub-DAGs of the underlying DAG $G_{0}$, and is called an aggregated PDAG (APDAG). In Example~\ref{Example: Motivation}, we intuitively overlay the CPDAGs of Figures~\ref{Subfigure Motivating example true CPDAG} and \ref{Subfigure Motivating example smaller CPDAG} to obtain the APDAG in Figure~\ref{Subfigure Motivating example APDAG}, that contains both the correct skeleton and some edge orientations that were present in $G_0$ but not in the CPDAG of $G_0$. The APDAG for Example~\ref{Example: Weak helps} is given in Figure~\ref{Subfigure Weak helps APDAG}, and is in this case identical to the CPDAG of $G_0$.

Our APDAG is a maximally oriented PDAG, as studied in \cite{Meek}.
We will show that APDAGs can be learned from observational data under a type of strong faithfulness condition, namely strong faithfulness with respect to a sequence of sub-DAGs of the underlying DAG. In this sense, our work is related to other structure learning algorithms that output maximally oriented PDAGs or DAGs  under certain restrictions on the model class \citep[e.g.,][]{ShimizuEtAl06-JMLR, hoyer08, petersbuehlmann14, ernestroth2016, Peters2014}. \cite{Ema2017} provide methods for causal reasoning with maximally oriented PDAGs.

We propose an algorithm to learn APDAGs, by aggregating the solution path of the greedy equivalence search (GES) algorithm for varying values of the penalty parameter. The algorithm is therefore called aggregated GES (AGES).
We show that the entire solution path can basically be computed at once, similarly to the computation of the solution path of the Lasso \citep{Lasso, LassoSolutionPath}. We also prove consistency of the algorithm, and demonstrate its performance in a simulation study and on data from \cite{Sachs}. All proofs are given in the supplementary material.


\section{PRELIMINARIES}\label{Section: Preliminaries}

\subsection{GRAPHICAL MODELS}\label{Subsection Terminology graphical models}

We now introduce the main terminology for graphical models that we need. Further definitions can be found in Section~\ref{Section: Preliminaries SuppMat} of the supplementary material.

A \emph{graph} $G=(X,E)$ consists of a set of \emph{vertices} $X=\{ X_{1}, \ldots ,X_{p} \}$ and a set of \emph{edges} $E$. The edges can be either \emph{directed} $X_{i} \rightarrow X_{j}$ or \emph{undirected} $X_{i} - X_{j}.$ A \emph{directed graph} is a graph that contains only directed edges. A \emph{partially directed graph} can contain both directed and undirected edges.

If $X_{i}\rightarrow X_{j}$, then $X_{i}$ is a \emph{parent} of $X_{j}$. The set of parents of $X_{i}$ in a graph $G$ is denoted by $\Pa_{G}(X_{i})$.
A triple $(X_{i},X_{j},X_{k})$ in a graph $G$ is called a \emph{v-structure} if $X_{i} \rightarrow X_{j} \leftarrow X_{k}$ and $X_{i}$ and $X_{k}$ are not adjacent in $G.$

A \emph{directed acyclic graph} (DAG) is a directed graph that does not contain directed cycles. A partially directed graph that does not contain directed cycles is a \emph{partially directed acyclic graph} (PDAG). A PDAG $P$ is \emph{extendible} to a DAG if the undirected edges of $P$ can be oriented to obtain a DAG without additional v-structures. The \emph{skeleton} of a partially directed graph $G$ is the graph obtained by replacing all directed edges by undirected edges, and is denoted by $\Skeleton(G).$ The \emph{directed part} of a partially directed graph $G$ is the graph obtained by removing all undirected edges, and is denoted by $\DirectedPart(G).$ A DAG $G$ restricted to a graph $H$ is the DAG $G^{\prime}$ obtained by removing from $G$ all adjacencies not present in $H.$ A DAG $G^{\prime}=(X,E^{\prime})$ is a \emph{sub-DAG} of a DAG $G=(X,E)$ if $E^{\prime}\subseteq E.$

A DAG encodes conditional independence constraints via the concept of d-separation \citep{Pearl2009}. Several DAGs can encode the same set of d-separations. Such DAGs are called \emph{Markov equivalent}. Markov equivalent DAGs have the same skeleton and the same v-structures \citep{VermaPearl}. A Markov equivalence class of DAGs can be represented by a \emph{completed partially directed acyclic graph} (CPDAG) \citep{Andersson, Chickering2002}. We denote by $\CPDAG(G)$ the CPDAG of a DAG $G$. A directed edge $X_i\to X_j$ in a CPDAG means that $X_i\to X_j$ occurs in all DAGs in the Markov equivalence class. An undirected edge $X_i-X_j$ in a CPDAG means that there is a DAG with $X_i\to X_j$ and a DAG with $X_i \leftarrow X_j$ in the Markov equivalence class.

We denote conditional independence of two variables $X_{i}$ and $X_{j}$ given a set $S\subseteq X\setminus \{X_{i},X_{j}\}$ by $X_{i} \ci X_{j} \vert S$, and the corresponding d-separation relation in a DAG $G$ is denoted by $X_{i} \cid_G X_{j} \vert S$.

A DAG $G=(X,E)$ is a perfect map of the distribution of $X$ if every conditional independence constraint in the distribution is also encoded by the DAG $G$ via d-separation, and vice versa. The first direction is known as the \emph{faithfulness condition} while the backward direction is known as the \emph{Markov condition}.
A multivariate Gaussian distribution is said to be \emph{$\delta$-strong faithful} to a DAG $G=(X,E)$ if for every $X_{i},X_{j}\in X$ and for every $S\subseteq X\setminus \{X_{i},X_{j}\}$ it holds that $X_{i} \ncid_{G} X_{j} \vert S \Rightarrow \vert \rho_{X_{i},X_{j}\vert S} \vert > \delta,$ where $\rho_{X_{i},X_{j} \vert S}$ is the partial correlation between $X_i$ and $X_j$ given $S$  \citep[cf.,][]{Zhang}.
Faithfulness is a special case of $\delta$-strong faithfulness with $\delta=0$.

Throughout the paper we consider distributions of $X$ that allow a perfect map representation through a DAG $G_{0}=(X,E)$. The density $f$ of $X$ then admits the following factorization based on $G_0$: $f(x)=\prod_{i=1}^{p} f(x_{i}\vert \Pa_{G_{0}}(x_{i})).$

We denote $n$ i.i.d.\ observations of $\tilde{X}\subseteq X$ by $\tilde{X}^{(n)}$. DAGs will be denoted with the letter $G,$ PDAGs with $P,$ CPDAGs with $C,$ and APDAGs with $A.$ We reserve the subscript $0$ for graphs associated with the true underlying distribution.

\subsection{STRUCTURE LEARNING ALGORITHMS}
\label{Subsection: TERMINOLOGY FOR STRUCTURE LEARNING ALGORITHMS}

We will make use of the Greedy Equivalence Search (GES) algorithm of \cite{ChickeringGES}. This algorithm is composed of two phases called the forward and the backward phase. Starting generally from the empty graph, the forward phase greedily adds edges, one at a time, minimizing each time a scoring criterion over the set of neighbouring CPDAGs. The forward phase stops when the score can no longer be improved by a single edge addition. At that point, the backward phase starts and removes edges, also one at a time, minimizing each time the same scoring criterion, until the score can no longer be improved.

GES operates on the space of CPDAGs. Conceptually, a move from one CPDAG to the next goes as follows: GES computes all DAGs belonging to the actual CPDAG. It then computes all possible edge additions (deletions) for each of the found DAGs. Among all possible edge additions (deletions) it chooses the one that leads to the maximum score improvement, and then computes the CPDAG of the resulting DAG. \cite{ChickeringGES} presented an efficient way to move from one CPDAG to the next without computing the DAGs as described above.

GES has one tuning parameter which we call penalty parameter and denote by $\lambda$.
As scoring criterion we take a penalized negative log-likelihood function of the following form:
\begin{align*}
&\mathcal{S}_{\lambda}(G,X^{(n)}) \\
&= -\sum_{i=1}^{p} \frac{1}{n} \log(L(X_{i}^{(n)},{\Pa}_{G}(X_{i})^{(n)})) + \lambda\vert E_G \vert
\end{align*}
where $L$ is the likelihood function \citep[cf. Definition~5.1 in][]{Nandy}. As oracle version of this scoring criterion, we use the true covariance matrix to compute the expected log-likelihood \citep[see][]{Nandy}. We denote the output of the oracle version of GES by $\GES_{\lambda}(f)$ and the output of the sample version of GES by $\GES_{\lambda}(X^{(n)}).$

\cite{ChickeringGES} showed consistency for GES for a class of scoring criteria including the Bayesian Information Criterion (BIC), which corresponds to $\lambda = \log(n)/(2n).$ The oracle version of GES is sound for $\lambda=0,$ i.e., $\GES_0(f)=\CPDAG(G_0).$

Given the density $f$ of $X$, the \emph{solution path} of the oracle version of GES is defined as the ordered set of CPDAGs $\GES_{\lambda}(f)$ for increasing values of the penalty parameter $\lambda$, $\lambda \geqslant 0$. Given $n$ i.i.d.\ samples $X^{(n)}$, the solution path of the sample version of GES is defined as the ordered set of estimated CPDAGs $\GES_{\lambda}(X^{(n)})$ for increasing values of the penalty parameter $\lambda$, for $\lambda \geqslant \log(n)/(2n)$.

\cite{Nandy} showed that the difference in score between two DAGs $G=(X,E)$ and $G^{\prime}=(X,E^{\prime})$ that differ by a single edge, i.e., $E^{\prime} = E \cup \{X_{i} \rightarrow X_{j}\},$ is given by
\begin{align} \label{Equation: Preetam}
       & S_{\lambda}(G^{\prime},X^{(n)}) - S_{\lambda}(G,X^{(n)}) \notag \\
       & \qquad \qquad = \frac{1}{2}\log(1- \hat{\rho}^{2}_{X_{i},X_{j} \vert \Pa_{G}(X_{j})}) + \lambda
\end{align}
(see Lemma~\ref{Lemma: 5.1 Preetam} of the supplementary material).
An edge is added (or deleted) in the forward (or backward) phase of GES only if this quantity is negative. To obtain the oracle version of Equation~\eqref{Equation: Preetam} we use the true covariance matrix to compute the partial correlation.


\section{AGES}\label{Section: Algorithms and theory}

The main idea behind our new algorithm, Algorithm~\ref{Algorithm: Main}, is to consider a sequence of sub-DAGs of the underlying DAG $G_{0}$, to compute their CPDAGs, and finally to aggregate these CPDAGs. Considering only sub-DAGs of $G_{0}$ ensures that if an edge is oriented in one of these CPDAGs it has the same orientation as in $G_{0}$. This property makes the aggregation intuitive since all CPDAGs will have compatible edge orientations. To learn these CPDAGs we need to assume a special type of $\delta$-strong faithfulness with respect to the sub-DAGs (see Theorem~\ref{Theorem: Equality of the CPDAGs}). The CPDAGs mentioned above can be computed efficiently using GES (see Section~\ref{Section: Computation}). Therefore, we base our new algorithm on GES and call it aggregated GES (AGES).

\subsection{THE APDAG $A_{0}$}\label{Section: New target}

\IncMargin{1em}
\begin{algorithm}[t]
   \SetKwInOut{Input}{input}
   \SetKwInOut{Output}{output}
   \Input{Ordered set of CPDAGs $\mathcal{C}=\{C_{0},\ldots,C_{k}\}$}
   \Output{APDAG $A$}
   \caption{AggregateCPDAGs}
   \label{Algorithm: AggregateCPDAGs}
   \BlankLine
   $A \leftarrow C_{0}$\\
   \For{$i \in \{1,\ldots, k\}$}{
   Define $P \leftarrow A$\\
   \For{All edges in $C_{i}$}{
   \If{an edge is oriented in $C_i$ but not in $P$}{Orient it in $P$ as in $C_{i}$} \label{Line in AggregateCPDAGs, define aggregation}
   }
   \If{$P$ is extendible to a DAG}{\label{Line in AggregateCPDAGs}
   $A \leftarrow P$
   }
   }
   \Return MeekOrient$(A)$ (Sec.\ \ref{Section: Preliminaries SuppMat} of the supp.\ material)
\end{algorithm}
\DecMargin{1em}

We construct our new target, the aggregated PDAG (APDAG) $A_0$, with the following four steps:
\begin{enumerate}[label=\text{S.\arabic*}]
\item \label{Step 1} Given a multivariate density $f$ of $X$, compute the solution path of the oracle version of GES for $\lambda\geqslant 0$ and keep the outputs whose skeletons are contained in the skeleton of $C_{0}=\GES_{0}(f).$  This yields a set of CPDAGs $\mathcal{C} = \{C_{0},\cdots, C_{k}\}$ with associated penalty parameters $\lambda_{0}<\ldots<\lambda_{k}$.\footnote{Throughout, we use the convention that any CPDAG computed by GES is associated with the smallest possible value of the penalty parameter $\lambda$ for which this output can be obtained.}
\item \label{Step 2} Construct the set of DAGs $\mathcal{G}=\{ G_{0}, \ldots, G_{k} \}$ consisting of $G_{0}$ restricted to the skeletons of the CPDAGs in $\mathcal{C}$.
\item \label{Step 3} Construct the CPDAGs $\tilde{\mathcal{C}}=\{\tilde{C}_{0},\ldots,\tilde{C}_{k}\}$ where $\tilde{C_{i}}=\CPDAG(G_{i})$, $0\leqslant~i~\leqslant~k$.
\item \label{Step 4} Let $A_0 = $ $\AggregateCPDAGs(\tilde{\mathcal C})$ (Algorithm \ref{Algorithm: AggregateCPDAGs}).
\end{enumerate}

We emphasize that $A_0$ is a theoretical object, since its construction involves the oracle version of GES and the orientations of the true underlying DAG $G_{0}.$
The construction ensures that $G_1,\dots,G_k$ are sub-DAGs of $G_{0}.$ Hence, any oriented edges in the corresponding CPDAGs $\tilde C_1,\dots,\tilde C_k$ also correspond to those in $G_0$. As a result, the APDAG $A_{0}$ has the same skeleton as $C_0$ and
 $$\DirectedPart(C_0) \subseteq \DirectedPart(A_{0}) \subseteq \DirectedPart(G_0).$$
This makes $A_{0}$ an interesting object to investigate.\footnote{We note that the if-clause on line~\ref{Line in AggregateCPDAGs} of Algorithm~\ref{Algorithm: AggregateCPDAGs} is not needed when applying the algorithm to $\tilde{\mathcal C}$; it is needed in the context of Algorithms \ref{Algorithm: Main} and \ref{Algorithm: Main(sample)}.}

\subsection{ORACLE VERSION OF AGES AND SOUNDNESS}\label{Subsection: AGES}

The oracle version of AGES is given in pseudocode as Algorithm~\ref{Algorithm: Main}. We use Example~\ref{Example: AGES} to illustrate it. Soundness of the algorithm is shown in Theorem~\ref{Theorem: Equality of the CPDAGs}.

\IncMargin{1em}
\begin{algorithm}[t]
   \SetKwInOut{Input}{input}
   \SetKwInOut{Output}{output}
   \Input{Distribution of $X$}
   \Output{APDAG $A$}
   \caption{AGES (oracle)}
   \label{Algorithm: Main}
   \BlankLine
   Compute the solution path of the oracle version of GES for $\lambda \geqslant 0$\\
   Discard all outputs whose skeletons are not contained in the skeleton of the output when $\lambda=0.$ Denote the remaining set of CPDAGs associated with $\lambda_{0}<\ldots<\lambda_{k}$ by $\mathcal{C}=\{C_{0}, \ldots, C_{k}\}$\label{Line: Discard Oracle}\\
   \Return $\AggregateCPDAGs(\mathcal{C})$
\end{algorithm}
\DecMargin{1em}

\begin{example}\label{Example: AGES}
    Consider the density $f$ generated by the weighted DAG in Figure~\ref{Subfigure Example AGES DAG} with $\varepsilon\sim N(0,D)$, where $D$ is a diagonal matrix with entries $(0.3,0.4,0.3,0.4)$. We compute the solution path of the oracle version of GES, shown in the six CPDAGs in Figures~\ref{Subfigure Example AGES CPDAG 0} - \ref{Subfigure Example AGES CPDAG 5}, corresponding to $\lambda_{0}< \dots < \lambda_5$.
    We discard $C_1$ and $C_2$ since their skeletons are not contained in the skeleton of $C_{0}.$ We then aggregate the remaining CPDAGs $C_0,$ $C_3,$ $C_4,$ and $C_5$, using lines 1-12 of Algorithm~\ref{Algorithm: AggregateCPDAGs}. The result shown in Figure~\ref{Subfigure Example AGES APDAG No Meek} contains additional orientations, coming from the v-structure $X_1\to X_3 \leftarrow X_2$ in $C_3$ . The final output in Figure~\ref{Subfigure Example AGES APDAG} shows two further oriented edges due to MeekOrient.
\end{example}

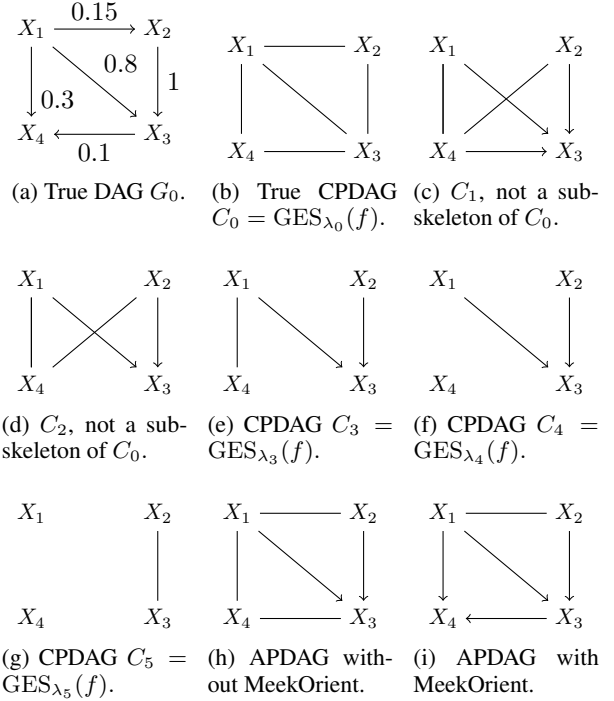
\begin{figure}[t]
   \centering
   \subfloat[True DAG $G_{0}$.]{
   \label{Subfigure Example AGES DAG}
   \begin{tikzpicture}[scale=0.7]
      \node[node] (X) at (-1.2,1) {$X_1$};
      \node[node] (Y) at (1.2,1) {$X_2$};
      \node[node] (Z) at (1.2,-1) {$X_3$};
      \node[node] (W) at (-1.2,-1) {$X_4$};
      \draw[->] (X) to [above] node{$0.15$} (Y);
      \draw[->] (X) to [above right]  node{$0.8$} (Z);
      \draw[->] (Y) to [right] node{$1$} (Z);
      \draw[->] (Z) to [below] node{$0.1$} (W);
      \draw[->] (X) to [below right] node{$0.3$} (W);
   \end{tikzpicture}
   }
   \hfill
   \subfloat[True CPDAG $C_{0}=\GES_{\lambda_{0}}(f).$]{
   \label{Subfigure Example AGES CPDAG 0}
   \begin{tikzpicture}[scale=0.7]
      \node[node] (X) at (-1.2,1) {$X_1$};
      \node[node] (Y) at (1.2,1) {$X_2$};
      \node[node] (Z) at (1.2,-1) {$X_3$};
      \node[node] (W) at (-1.2,-1) {$X_4$};
      \draw[-] (X) to [above] (Y);
      \draw[-] (X) to [left]  (Z);
      \draw[-] (Y) to [right](Z);
      \draw[-] (Z) to [below] (W);
      \draw[-] (X) to [left] (W);
   \end{tikzpicture}
   }
   \hfill
   \subfloat[$C_1$, not a sub-skeleton of $C_{0}.$]{
   \label{Subfigure Example AGES not sub-skeleton 1}
   \begin{tikzpicture}[scale=0.7]
      \node[node] (X) at (-1.2,1) {$X_1$};
      \node[node] (Y) at (1.2,1) {$X_2$};
      \node[node] (Z) at (1.2,-1) {$X_3$};
      \node[node] (W) at (-1.2,-1) {$X_4$};
      \draw[-] (Y) to [above] (W);
      \draw[->] (X) to [left]  (Z);
      \draw[->] (Y) to [right](Z);
      \draw[<-] (Z) to [below] (W);
      \draw[-] (X) to [left] (W);
      \draw[-] (X) to [left] (W);
   \end{tikzpicture}
   }
   \hfill
   \subfloat[$C_2$, not a sub-skeleton of $C_{0}.$]{
   \label{Subfigure Example AGES not sub-skeleton 2}
   \begin{tikzpicture}[scale=0.7]
      \node[node] (X) at (-1.2,1) {$X_1$};
      \node[node] (Y) at (1.2,1) {$X_2$};
      \node[node] (Z) at (1.2,-1) {$X_3$};
      \node[node] (W) at (-1.2,-1) {$X_4$};
      \draw[-] (Y) to [above] (W);
      \draw[->] (X) to [left]  (Z);
      \draw[->] (Y) to [right](Z);
      \draw[-] (X) to [left] (W);
      \draw[-] (X) to [left] (W);
   \end{tikzpicture}
   }
   \hfill
   \subfloat[CPDAG $C_{3}=\GES_{\lambda_{3}}(f).$]{
   \label{Subfigure Example AGES CPDAG 3}
   \begin{tikzpicture}[scale=0.7]
      \node[node] (X) at (-1.2,1) {$X_1$};
      \node[node] (Y) at (1.2,1) {$X_2$};
      \node[node] (Z) at (1.2,-1) {$X_3$};
      \node[node] (W) at (-1.2,-1) {$X_4$};
      \draw[->] (X) to [left](Z);
      \draw[-] (X) to [left](W);
      \draw[->] (Y) to [left](Z);
   \end{tikzpicture}
   }
   \hfill
   \subfloat[CPDAG $C_{4}=\GES_{\lambda_{4}}(f).$]{
   \label{Subfigure Example AGES APDAG CPDAG 4}
   \begin{tikzpicture}[scale=0.7]
      \node[node] (X) at (-1.2,1) {$X_1$};
      \node[node] (Y) at (1.2,1) {$X_2$};
      \node[node] (Z) at (1.2,-1) {$X_3$};
      \node[node] (W) at (-1.2,-1) {$X_4$};
      \draw[->] (X) to [left](Z);
      \draw[->] (Y) to [left](Z);
   \end{tikzpicture}
   }
   \hfill
   \subfloat[CPDAG $C_{5}=\GES_{\lambda_{5}}(f).$]{
   \label{Subfigure Example AGES CPDAG 5}
   \begin{tikzpicture}[scale=0.7]
      \node[node] (X) at (-1.2,1) {$X_1$};
      \node[node] (Y) at (1.2,1) {$X_2$};
      \node[node] (Z) at (1.2,-1) {$X_3$};
      \node[node] (W) at (-1.2,-1) {$X_4$};
      \draw[-] (Y) to [left](Z);
   \end{tikzpicture}
   }
   \hfill
   \subfloat[APDAG without MeekOrient.]{
   \label{Subfigure Example AGES APDAG No Meek}
   \begin{tikzpicture}[scale=0.7]
      \node[node] (X) at (-1.2,1) {$X_1$};
      \node[node] (Y) at (1.2,1) {$X_2$};
      \node[node] (Z) at (1.2,-1) {$X_3$};
      \node[node] (W) at (-1.2,-1) {$X_4$};
      \draw[-] (X) to [above] (Y);
      \draw[->] (X) to [above right] (Z);
      \draw[->] (Y) to [right] (Z);
      \draw[-] (Z) to [below] (W);
      \draw[-] (X) to [left](W);
   \end{tikzpicture}
   }
   \hfill
   \subfloat[APDAG with MeekOrient.]{
   \label{Subfigure Example AGES APDAG}
   \begin{tikzpicture}[scale=0.7]
      \node[node] (X) at (-1.2,1) {$X_1$};
      \node[node] (Y) at (1.2,1) {$X_2$};
      \node[node] (Z) at (1.2,-1) {$X_3$};
      \node[node] (W) at (-1.2,-1) {$X_4$};
      \draw[-] (X) to [above] (Y);
      \draw[->] (X) to [above right] (Z);
      \draw[->] (Y) to [right] (Z);
      \draw[->] (Z) to [below] (W);
      \draw[->] (X) to [left](W);
   \end{tikzpicture}
   }
   \caption{Illustration of the oracle AGES algorithm (see Example~\ref{Example: AGES}).}
   \label{Figure: Example AGES}
\end{figure}

\begin{theorem}\label{Theorem: Equality of the CPDAGs}
   Given a multivariate Gaussian distribution of $X$ with a perfect map $G_{0}=(X,E)$, let $\mathcal{G}$ be the set of DAGs constructed in Step~\ref{Step 2}, and let $\tilde{\mathcal{C}}$ be the corresponding set of CPDAGs of Step~\ref{Step 3}. Assume that for all $1\leqslant i \leqslant k$ the distribution of $X$ is $\delta_{i}$-strong faithful with respect to $G_{i} \in \mathcal{G}$, where $\delta_{i}$ is such that $\lambda_{i} = -1/2\log(1-\delta_{i}^2).$
   Then $\GES_{\lambda_i}(f) = C_{i}= \tilde{C}_i$ for all $1\leqslant i \leqslant k,$ and the oracle version of AGES returns the APDAG $A_{0}.$
\end{theorem}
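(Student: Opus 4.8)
The plan is to reduce the whole theorem to the single core claim that $\GES_{\lambda_i}(f) = \CPDAG(G_i)$ for each $1 \leqslant i \leqslant k$, and then read off both assertions. The equality $\GES_{\lambda_i}(f) = C_i$ holds by the very definition of $\mathcal C$ in Step~\ref{Step 1}, so the substance is $C_i = \tilde C_i = \CPDAG(G_i)$. Granting this for all $i$ (the case $i=0$ being immediate: since $\Skeleton(C_0)=\Skeleton(G_0)$, restricting $G_0$ to $\Skeleton(C_0)$ returns $G_0$, so $\tilde C_0 = \CPDAG(G_0)=\GES_0(f)=C_0$ by soundness), we obtain $\mathcal C = \tilde{\mathcal C}$ as ordered sets. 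Since Algorithm~\ref{Algorithm: Main} returns $\AggregateCPDAGs(\mathcal C)$ whereas $A_0 = \AggregateCPDAGs(\tilde{\mathcal C})$ by Step~\ref{Step 4}, the oracle AGES then returns $A_0$. Thus the entire theorem rests on the core claim.

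To prove the core claim I would first dispose of the skeleton and reduce to orientations. By construction $G_i$ is $G_0$ restricted to $\Skeleton(C_i)$, so it is a sub-DAG of $G_0$, hence acyclic, and $\Skeleton(\CPDAG(G_i)) = \Skeleton(G_i) = \Skeleton(C_i)$. Since two CPDAGs sharing a skeleton coincide exactly when they have the same v-structures (Verma--Pearl), it remains to show that $C_i$ and $G_i$ have identical v-structures.

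The key device is to convert the penalty $\lambda_i$ into a partial-correlation threshold. By Equation~\eqref{Equation: Preetam} and the calibration $\lambda_i = -\tfrac12\log(1-\delta_i^2)$, the oracle GES at penalty $\lambda_i$ adds or retains a directed edge $X_a \to X_b$ relative to a current parent set $S = \Pa(X_b)$ precisely when $|\rho_{X_a,X_b\mid S}| > \delta_i$. Hence every greedy move of oracle GES at $\lambda_i$ coincides with the move of a faithful oracle that declares conditional dependence exactly when $|\rho| > \delta_i$. I would then argue that this thresholded oracle reproduces d-separation and d-connection in $G_i$ along the search path: a d-connection in $G_i$ forces $|\rho| > \delta_i$ by the assumed $\delta_i$-strong faithfulness with respect to $G_i$, while termination of the forward and backward phases at $C_i$ forces every absent edge to have conditional partial correlation $\leqslant \delta_i$ (given the relevant parents) and every present edge to have conditional partial correlation $> \delta_i$. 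With the thresholded oracle thus matching the conditional-independence structure of $G_i$, oracle GES at $\lambda_i$ makes exactly the moves it would make when run at penalty $0$ on a distribution faithful to $G_i$; by soundness of GES at $\lambda=0$ applied to that distribution, the output is $\CPDAG(G_i)$, yielding $C_i = \CPDAG(G_i)$ and in particular the required agreement of v-structures.

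I expect the main obstacle to be the reverse implication in this equivalence, namely that a d-separation in $G_i$ entails a below-threshold partial correlation in the distribution of $X$. The difficulty is that faithfulness is available only for the full perfect map $G_0$, whereas $G_i$ is a strict sub-DAG, so a separation holding in $G_i$ may correspond to a genuine but weak dependence created by the deleted edges of $G_0$. Pinning this down rigorously requires exploiting that $G_i$ is defined precisely as $G_0$ restricted to the GES-output skeleton, together with the local optimality of $C_i$ at $\lambda_i$, so that the removed edges are exactly those whose conditional partial correlation has dropped to $\delta_i$ or below. Reconciling the global faithfulness to $G_0$ with this thresholded, sub-DAG behaviour---and thereby ruling out both spurious and missing v-structures---is where the real work lies.
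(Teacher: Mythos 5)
Your reduction of the theorem to the single claim $\GES_{\lambda_i}(f)=\CPDAG(G_i)$, and the further reduction to equality of v-structures via the shared skeleton, matches the overall structure of the paper's argument. However, the step you defer to the end --- showing that a d-separation in $G_i$ entails $\vert\rho_{X_a,X_b\mid S}\vert\leqslant\delta_i$ --- is a genuine gap, not merely ``where the real work lies'': that implication is not a consequence of the hypotheses. The assumed $\delta_i$-strong faithfulness with respect to $G_i$ gives only the forward direction (d-connection in $G_i$ implies $\vert\rho\vert>\delta_i$), and since the distribution is faithful to $G_0$ rather than to the sub-DAG $G_i$, a separation in $G_i$ can in principle coexist with a large partial correlation. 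Consequently your ``thresholded oracle'' need not reproduce the conditional-independence structure of $G_i$, and the appeal to soundness of GES at $\lambda=0$ for a distribution faithful to $G_i$ does not go through as stated.

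The paper closes exactly this gap with a purely graphical device that requires only the direction you do have. Using strong faithfulness (for the forward phase) and Lemma~\ref{Lemma: deleting an important edge} (for the backward phase, a case your sketch does not treat separately), one shows only that the output $C^{b}=\GES_{\lambda_i}(f)$ is an \emph{I-map} of $\tilde C_i=\CPDAG(G_i)$. Then Lemma~\ref{Lemma: Skeleton condition} --- if one CPDAG is an I-map of another and the two have the same skeleton, then they are perfect maps of each other, hence equal --- finishes the proof, because $\Skeleton(\tilde C_i)=\Skeleton(G_i)=\Skeleton(C_i)$ holds by the very definition of $G_i$ as $G_0$ restricted to $\Skeleton(C_i)$. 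This is the missing ingredient: a one-sided I-map relation together with skeleton equality already forces identical v-structures, so the reverse implication you were seeking is never needed.
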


Since the above $\delta_i$-strong faithfulness assumption with respect to $G_i$ for $1\leqslant i\leqslant k$ is related to the solution path of GES, we refer to it as \emph{path strong faithfulness}.

\subsection{SAMPLE VERSION OF AGES AND CONSISTENCY}\label{Subsection: Sample Version}

The sample version of AGES is given in Algorithm~\ref{Algorithm: Main(sample)}. We see that the algorithm considers the output of the sample version of GES for all $\lambda \geqslant \log(n)/(2n)$, i.e., by penalizing equally strong or stronger than BIC for model complexity.

\IncMargin{1em}
\begin{algorithm}[t]
   \SetKwInOut{Input}{input}
   \SetKwInOut{Output}{output}
   \Input{$X^{(n)}$, containing $n$ i.i.d.\ observations of $X$}
   \Output{Estimated APDAG $\hat A$}
   \caption{AGES (sample)}
   \label{Algorithm: Main(sample)}
   \BlankLine
   Compute the solution path of the sample version of GES for $\lambda \geqslant \log(n)/(2n)$\\
   Discard all outputs whose skeletons are not contained in the skeleton of the output when $\lambda=\log(n)/(2n)$. Denote the remaining CPDAGs, ordered according to increasing penalty parameter $\lambda$, by $\hat{\mathcal{C}}=\{\hat{C}_{0}, \ldots, \hat{C}_{k}\}$\label{Line: Discard Sample}\\
   \Return AggregateCPDAGs$(\hat{\mathcal{C}})$
\end{algorithm}
\DecMargin{1em}

In line~\ref{Line: Discard Sample} of Algorithm~\ref{Algorithm: Main(sample)}, we may obtain CPDAGs with conflicting orientations. Because of such possible conflicts, we need the if-clause on line~\ref{Line in AggregateCPDAGs} of Algorithm~\ref{Algorithm: AggregateCPDAGs}. The aggregation algorithm is constructed so that orientations in $\hat C_\ell$ are only taken into account if they are compatible with the aggregated graph based on $\hat C_0,\dots,\hat C_{\ell-1}$. In particular, the algorithm ensures that we stay within the Markov equivalence class defined by $\hat C_0=\GES_{\log(n)/(2n)}(X^{(n)})$, i.e., the output of GES.

Let $\AGES(X^{(n)})$ denote the output of AGES based on a sample $X^{(n)}$. Theorem~\ref{Theorem: Consistency} shows consistency of AGES.

\begin{theorem}\label{Theorem: Consistency}
  Under the conditions of Theorem~\ref{Theorem: Equality of the CPDAGs}, we have $$\lim_{n \rightarrow \infty}\mathbb{P}\left(\AGES(X^{(n)})=A_{0} \right) \rightarrow 1.$$
\end{theorem}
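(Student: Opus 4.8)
The plan is to reduce the sample statement to the oracle statement of Theorem~\ref{Theorem: Equality of the CPDAGs}. Since $\AggregateCPDAGs$ is a deterministic function of its input, it suffices to show that the random ordered set $\hat{\mathcal{C}}=\{\hat C_0,\dots,\hat C_k\}$ computed on lines~1--2 of Algorithm~\ref{Algorithm: Main(sample)} equals the oracle set $\mathcal{C}=\{C_0,\dots,C_k\}=\tilde{\mathcal{C}}$ with probability tending to one. On that event we have $\AggregateCPDAGs(\hat{\mathcal C})=\AggregateCPDAGs(\mathcal C)=A_0$ by Theorem~\ref{Theorem: Equality of the CPDAGs}, and the claim follows. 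The heart of the argument is therefore to show that the sample solution path of GES reproduces the oracle solution path.

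First I would reduce the (continuum of) solution path to finitely many breakpoints. By Equation~\eqref{Equation: Preetam}, every forward and backward move of GES is decided by the sign of $\tfrac12\log(1-\hat\rho^2_{X_i,X_j\mid S})+\lambda$, so as $\lambda$ increases the output is piecewise constant and can change only at values of the form $-\tfrac12\log(1-\rho^2_{X_i,X_j\mid S})$. Because $p$ is fixed, there are finitely many CPDAGs on $X$ and finitely many relevant partial correlations $\rho_{X_i,X_j\mid S}$, so both the oracle and the sample path have finitely many breakpoints. The sample partial correlations converge in probability to their oracle counterparts (continuous mapping applied to $\hat\Sigma\to\Sigma$), uniformly over this finite collection; hence each sample breakpoint converges in probability to the corresponding oracle breakpoint.

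Next I would use \emph{path strong faithfulness} to match the two paths. Under this assumption the oracle breakpoints are strictly increasing and distinct, so the distinct CPDAGs along the oracle path occur on disjoint, non-degenerate $\lambda$-intervals, with $C_0=\CPDAG(G_0)$ on a neighborhood of $\lambda=0$. Fixing $\lambda$ in the interior of one such interval, the oracle decisions are governed by strict inequalities bounded away from $0$; by convergence of the partial correlations these inequalities are preserved with probability tending to one, so $\GES_\lambda(X^{(n)})=\GES_\lambda(f)$ on that event (this is a fixed-$\lambda$ consistency statement, obtained by the arguments underlying consistency of GES in \cite{ChickeringGES} together with the convergence above). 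Consequently the full ordered sequence of distinct CPDAGs along the sample path matches that of the oracle path with high probability. In particular, since $\log(n)/(2n)\to 0$, the starting CPDAG $\hat C_0=\GES_{\log(n)/(2n)}(X^{(n)})$ lies in the first interval and, by consistency of GES for BIC \citep{ChickeringGES}, equals $C_0=\CPDAG(G_0)$, so $\Skeleton(\hat C_0)=\Skeleton(C_0)$ with high probability. The skeleton-containment test on line~\ref{Line: Discard Sample} then discards exactly the same CPDAGs as the oracle test, yielding $\hat{\mathcal C}=\mathcal C$ with high probability. On this event the input to $\AggregateCPDAGs$ is the oracle input, for which the orientations are compatible and the extendibility if-clause on line~\ref{Line in AggregateCPDAGs} never fails, so the sample run reproduces the oracle run and returns $A_0$.

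I expect the main obstacle to be the third step: controlling the behaviour over the entire solution path rather than at a single penalty value. The key is the reduction to finitely many breakpoints via finiteness of CPDAGs on $p$ vertices, followed by the use of path strong faithfulness to guarantee that the oracle breakpoints are well separated, so that the converging sample breakpoints cannot reorder or merge the CPDAGs appearing along the path. A secondary technical point is ruling out ties in the greedy argmax of GES, which also hold with probability tending to one because the corresponding oracle score comparisons are strict under faithfulness.
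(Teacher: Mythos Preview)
Your proposal is correct and follows essentially the same strategy as the paper: reduce to Theorem~\ref{Theorem: Equality of the CPDAGs} by showing that the sample solution path of GES coincides with the oracle solution path, which in turn follows from consistency of the finitely many sample partial correlations entering the score differences in Equation~\eqref{Equation: Preetam}. The paper packages this more compactly by observing that the entire solution path is a function of the \emph{ranking} of the vector of all absolute partial correlations (appended with the BIC threshold $(1-n^{-1/n})^{1/2}$, respectively $0$ for the oracle), so it suffices to bound $\mathbb{P}(\rank(\hat{\boldsymbol\gamma})\neq\rank(\boldsymbol\gamma))$ by a union bound over deviations larger than half the minimum gap $\epsilon$ between distinct population values.

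Two small remarks. First, the separation of oracle breakpoints that you need does not come from path strong faithfulness; it is automatic because the relevant partial correlations form a finite set, so distinct values are bounded apart (this is the paper's $\epsilon$). Path strong faithfulness is only invoked through Theorem~\ref{Theorem: Equality of the CPDAGs} to identify the oracle output with $A_0$. Second, your appeal to faithfulness to rule out ties in the greedy argmax is not quite right: faithfulness does not prevent two different partial correlations from being equal. The paper handles this by allowing the oracle ranking to be any ranking consistent with some tie-breaking rule, and declaring $\rank(\hat{\boldsymbol\gamma})=\rank(\boldsymbol\gamma)$ whenever the sample ranking matches one such oracle ranking.
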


\subsection{THE PATH STRONG FAITHFULNESS ASSUMPTION}
\label{Section: The strong-faithfulness assumption}

The $\delta$-strong faithfulness assumption has been used before, for example to
prove uniform consistency and high-dimensional consistency
of structure learning methods \citep{Kalisch, Zhang}. On the other hand,
it has been criticised for being too strong \citep{StrongF}.

We do not assume the classical $\delta$-strong faithfulness for the underlying distribution with respect to $G_0$. Instead, we assume $\delta_i$-strong faithfulness of the distribution of $X$ with respect to the sequence of sub-DAGs $G_1,\dots,G_k$ as defined in Step~\ref{Step 2}, with corresponding $\lambda_1< \dots < \lambda_k$. Hence, the corresponding $\delta_i$s satisfy $\delta_1 < \dots < \delta_k$. Since smaller values of $\lambda$ typically yield denser graphs, it follows that for smaller values of $\delta_i$, the assumption has to hold with respect to a denser graph, while for larger values of $\delta_i$, the assumption has to hold with respect to a sparser graph.

\begin{example}\label{example: strong-faithfulness}
   We first analyse the path strong faithfulness assumption by considering the SEM given in Example~\ref{Example: Motivation}, but with unspecified edge weights $B_{13}$ and $B_{23}$:
   \begin{align*}
      X_1 &= \varepsilon_{1}\\
      X_2 &=  0.1 \cdot X_1 + \varepsilon_{2}\\
      X_3 &= B_{13}X_1 + B_{23}X_2 + \varepsilon_{3},
   \end{align*}
   and $\varepsilon \sim N(0,I)$.

   Depending on the edge weights, $A_0$ can be either the APDAG in Figure~\ref{Subfigure Informative APDAG} or in Figure~\ref{Subfigure Uninformative APDAG}. Figure~\ref{Figure dependence on edge weights} illustrates how $A_0$ and the path strong faithfulness assumption are related to the edge weights $B_{13} \in [-2,2]$ and $B_{23} \in [-2,2]$. We split the $[-2,2]\times[-2,2]$ rectangle into the following three regions:
   \begin{description}
      \item[White region:] $A_0$ equals the APDAG in Figure~\ref{Subfigure Informative APDAG} and the path strong faithfulness assumption is satisfied.
      \item[Grey region:] $A_0$ equals the APDAG in Figure~\ref{Subfigure Uninformative APDAG} and the path strong faithfulness assumption is satisfied.
      \item[Black region:] $A_0$ equals the APDAG in Figure~\ref{Subfigure Uninformative APDAG} and the path strong faithfulness assumption is violated.
   \end{description}

   The output $A$ of the oracle version of AGES can be one of the four APDAGs in Figure \ref{Figure all APDAGs}. Theorem \ref{Theorem: Equality of the CPDAGs} guarantees that $A = A_0$ when path strong faithfulness is satisfied, i.e., outside of the black region. Further, for this example, $A$ equals one of the APDAGs in Figures \ref{Subfigure APDAG with one wrong orientation} and \ref{Subfigure APDAG with two wrong orientations} on the black region. This demonstrates that, in this simple example with $p=3$, our strong faithfulness assumption is, in fact, a necessary and sufficient condition for having $A= A_0$. We emphasize that for $p>3$, we may have $A= A_0$ even when the strong faithfulness assumption is violated.

   Figure~\ref{Figure dependence on edge weights} shows that in a large fraction of the plane we gain structural information (white region), on a smaller part we perform as GES (grey region), and on another smaller part we make some errors when orienting edges (black region). Details about the construction of Figure~\ref{Figure dependence on edge weights} are given in Section~\ref{Section: path strong faithful} of the supplementary material.
\end{example}

\begin{figure}[t]
   \centering
   \hspace{1cm}
   \subfloat[Informative APDAG.]{
   \label{Subfigure Informative APDAG}
   \begin{tikzpicture}[scale=0.7]
      \node[node] (X) at (-1.2,1) {$X_1$};
      \node[node] (Y) at (1.2,1) {$X_2$};
      \node[node] (Z) at (0,-1) {$X_3$};
      \draw[-] (X) to [above]  (Y);
      \draw[thick,->] (X) to [left]  (Z);
      \draw[thick,->] (Y) to [right] (Z);
   \end{tikzpicture}
   }
   \hfill
   \subfloat[Uninformative APDAG.]{
   \label{Subfigure Uninformative APDAG}
   \begin{tikzpicture}[scale=0.7]
      \node[node] (X) at (-1.2,1) {$X_1$};
      \node[node] (Y) at (1.2,1) {$X_2$};
      \node[node] (Z) at (0,-1) {$X_3$};
      \draw[-] (X) to [above]  (Y);
      \draw[-] (X) to [left]  (Z);
      \draw[-] (Y) to [right]  (Z);
   \end{tikzpicture}
   }
   \hspace{1cm}
   \vfill
   \hspace{1cm}
   \subfloat[APDAG with one wrong orientation.]{
   \label{Subfigure APDAG with one wrong orientation}
   \begin{tikzpicture}[scale=0.7]
      \node[node] (X) at (-1.2,1) {$X_1$};
      \node[node] (Y) at (1.2,1) {$X_2$};
      \node[node] (Z) at (0,-1) {$X_3$};
      \draw[thick,->] (X) to [above]  (Y);
      \draw[thick,->] (Z) to [left]  (Y);
      \draw[-] (X) to [right] (Z);
   \end{tikzpicture}
   }
   \hfill
   \subfloat[APDAG with two wrong orientations.]{
   \label{Subfigure APDAG with two wrong orientations}
   \begin{tikzpicture}[scale=0.7]
      \node[node] (X) at (-1.2,1) {$X_1$};
      \node[node] (Y) at (1.2,1) {$X_2$};
      \node[node] (Z) at (0,-1) {$X_3$};
      \draw[thick,->] (Y) to [above]  (X);
      \draw[thick,->] (Z) to [left]  (X);
      \draw[-] (Y) to [right] (Z);
   \end{tikzpicture}
   }
   \hspace{1cm}
   \hfill
   \caption{The possible outputs of AGES in Example~\ref{example: strong-faithfulness}.}
   \label{Figure all APDAGs}
\end{figure}
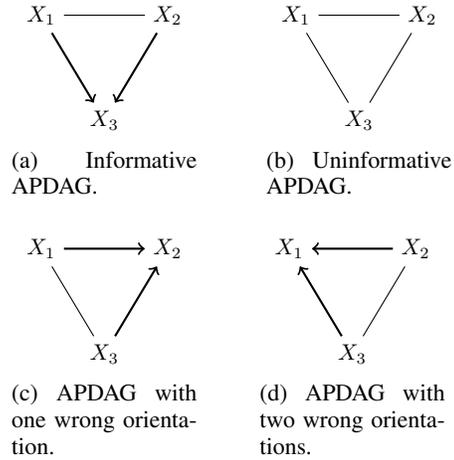

\begin{figure}[t]
   \centering
   \includegraphics[width=0.35\textwidth,height=0.35\textwidth]{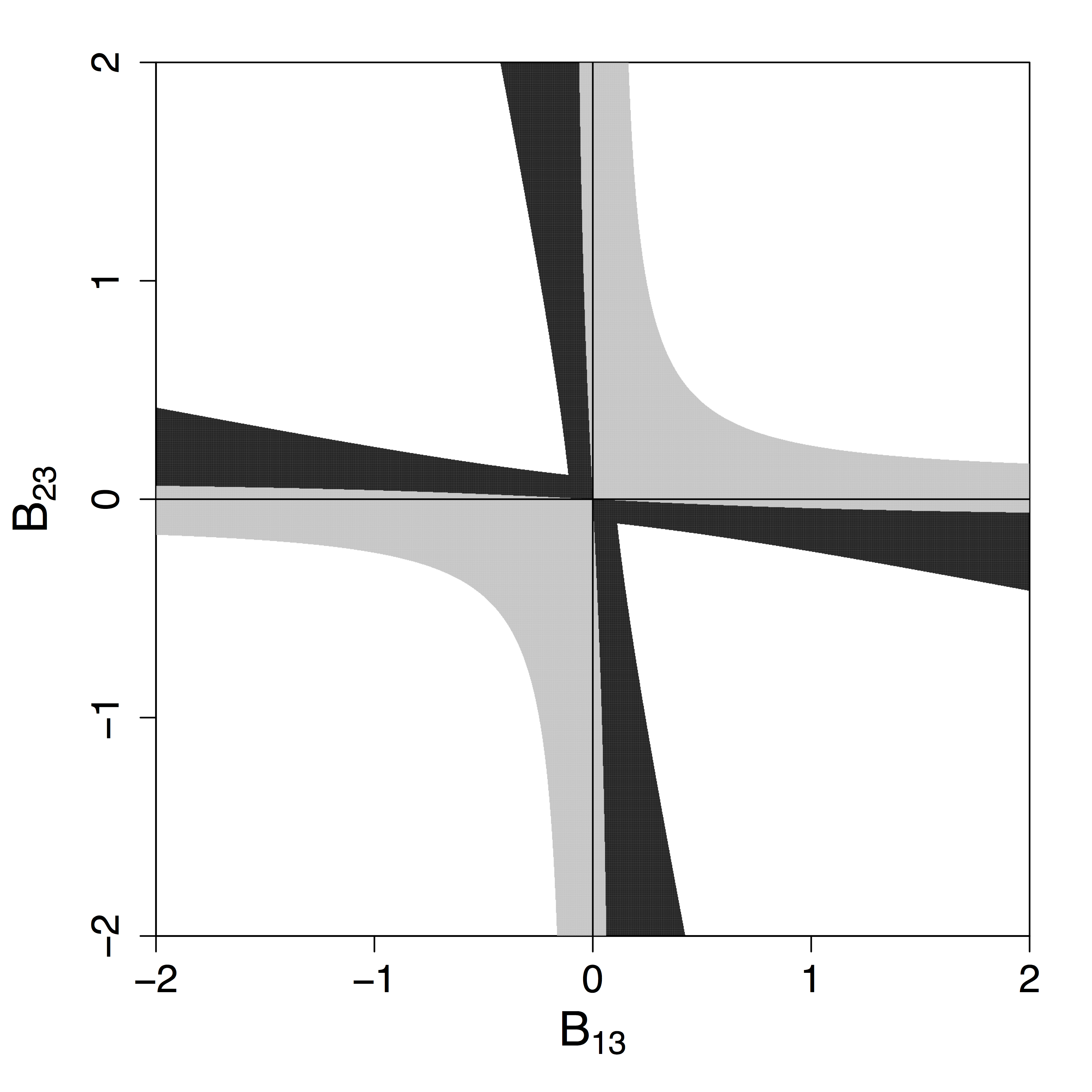}
   \caption{Visual representation of the dependence of $A_0$ and the path strong faithfulness assumption on the edge weights in Example~\ref{example: strong-faithfulness}.}
   \label{Figure dependence on edge weights}
\end{figure}

The path strong faithfulness assumption is sufficient but not necessary for Theorem~\ref{Theorem: Equality of the CPDAGs}. In Section~\ref{Subsection: Strong Faithfulness} of the supplementary material we provide a weaker version of the assumption that is necessary and sufficient for equality of $\tilde{\mathcal C}$ (as defined in Step~\ref{Step 3}) and $\mathcal C$ (as defined in line~\ref{Line: Discard Oracle} of Algorithm \ref{Algorithm: Main}). This weaker version is only sufficient for equality of the true APDAG $A_0$ and the oracle output $A$ of AGES $(\AggregateCPDAGs(\mathcal C))$, since not all orientations of the CPDAGs in $\mathcal C$ are used in the aggregation process. The supplementary material also contains empirical results where we evaluated equality of $\tilde{\mathcal C}$ and $\mathcal C$, as well $A_0$ and $A$, for the simulation setting described in Section \ref{Section: Simulations}.

\subsection{COMPUTATION}\label{Section: Computation}

The forward phase of GES (of both the oracle version and the sample version) can be computed at once for all $\lambda\geqslant 0$. This follows from Equation~\eqref{Equation: Preetam}.
At each step in the forward phase, GES conceptually searches for the in absolute value largest partial correlation $|\rho_{X_i,X_j|\Pa_G(X_j)}|$ among all DAGs $G$ in the current Markov equivalence class, and all pairs $X_i$ and $X_j$ that are not adjacent in $G$ and where $X_i$ is a non-descendant of $X_j$ in $G$. The algorithm then adds the corresponding edge $X_i\to X_j$ to $G$ if the score is improved, that is, if $1/2\log(1-\rho_{X_{i},X_{j}\vert Pa_G(X_{j})}^2) + \lambda < 0$, and then constructs the CPDAG the resulting DAG.

Thus, starting the forward phase with the empty graph and a very large $\lambda$, no edge is added. By decreasing $\lambda$ so that $\lambda < \max_{i,j} -1/2\log(1-\rho_{X_{i},X_{j}}^2)$, the first edge is added. By decreasing $\lambda$ further, one can compute the entire solution path of the forward phase in one go, analogously to the computation of the solution path of the lasso \citep{Lasso, LassoSolutionPath}.

For each distinct output of the forward phase, obtained for a given $\lambda$, one has to run the backward phase with this $\lambda$. Since the backward phase of GES usually only conducts very few steps, this does not cause a large computational burden.

The fast computation of the entire solution path of GES is one of the reasons for basing our approach on GES, rather than, for example, on the PC-algorithm for a range of different tuning parameters $\alpha$.


\section{EMPIRICAL RESULTS}\label{Section: Simulations}

\subsection{SIMULATION SETUP}\label{subsection: data}

We simulate data from SEMs of the following form:
$$X = B^{T}X + \varepsilon,$$
with $\varepsilon \sim \mathcal{N}(0,D)$, where $D$ is a $p\times p$ diagonal matrix whose diagonal entries are drawn independently from a Unif(0.5,1.5) distribution.

In order to vary the concentration of strong and weak edge weights as well as the sparsity of the models, we consider all combinations of pairs $(q_s,q_w) \in \{0.1, 0.3, 0.5, 0.7\}$ such that $q_s+q_w \leqslant 1$. Each entry of the matrix $B$ has a probability of $q_s$ of being strong, of $q_w$ of being weak, and of $(1-q_s-q_w)$ of being $0$. The nonzero edge weights in the $B$ matrix are drawn independently as follows: the absolute values of the weak and the strong edge weights are drawn from Unif(0.1,0.3) and Unif(0.8,1.2), respectively. The sign of each edge weight is chosen to be positive or negative with equal probabilities. Finally, in order to investigate whether our algorithm performs at least as good as GES when we do not encourage the presence of weak edges, we also simulate from SEMs with $q_s\in \{0.1,0.2,\ldots, 1\}$ and $q_w=0$.

We simulate from SEMs with $p=10$ variables. The sample size used in the plots in the main paper is $10000.$ The number of simulations for each settings is 500.

In Section~\ref{Section: other empirical results} of the supplementary material we show additional plots corresponding to sample sizes $100$ and $1000$. Those plots show a similar pattern as the ones in the main paper, but the ability to gain additional edge orientations diminishes for smaller $n$. Section~\ref{Section: Simulations with p=100} of the supplementary material also shows simulation results for $p=100$ and varying sample sizes.

\subsection{SIMULATION RESULTS}\label{Subsection: Performance Assessment}

Since AGES always outputs the same skeleton as GES by construction, we analyse the performance of GES and AGES by comparing their precision and recall in estimating the directed part of the true DAG. The recall is the ratio of the number of correctly oriented edges in the estimated graph and the total number of oriented edges in the true DAG. The precision is the ratio of the number of correctly oriented edges in the estimated graph and the total number of oriented edges in the estimated graph.

Figure~\ref{Figure: recall0505} summarizes the performance of GES and AGES (with $\lambda=\log(n)/(2n)$) for all combinations of $(q_s,q_w) \in \{0.1, 0.3, 0.5, 0.7\}$ such that $q_s+q_w \leqslant 1$. In each setting, AGES outperforms GES in recall, while achieving a roughly similar performance as GES in precision. This demonstrates that AGES is able to orient more edges than GES without increasing the false discovery rate.

\begin{figure}[t]
   \centering
   \includegraphics[scale=0.25]{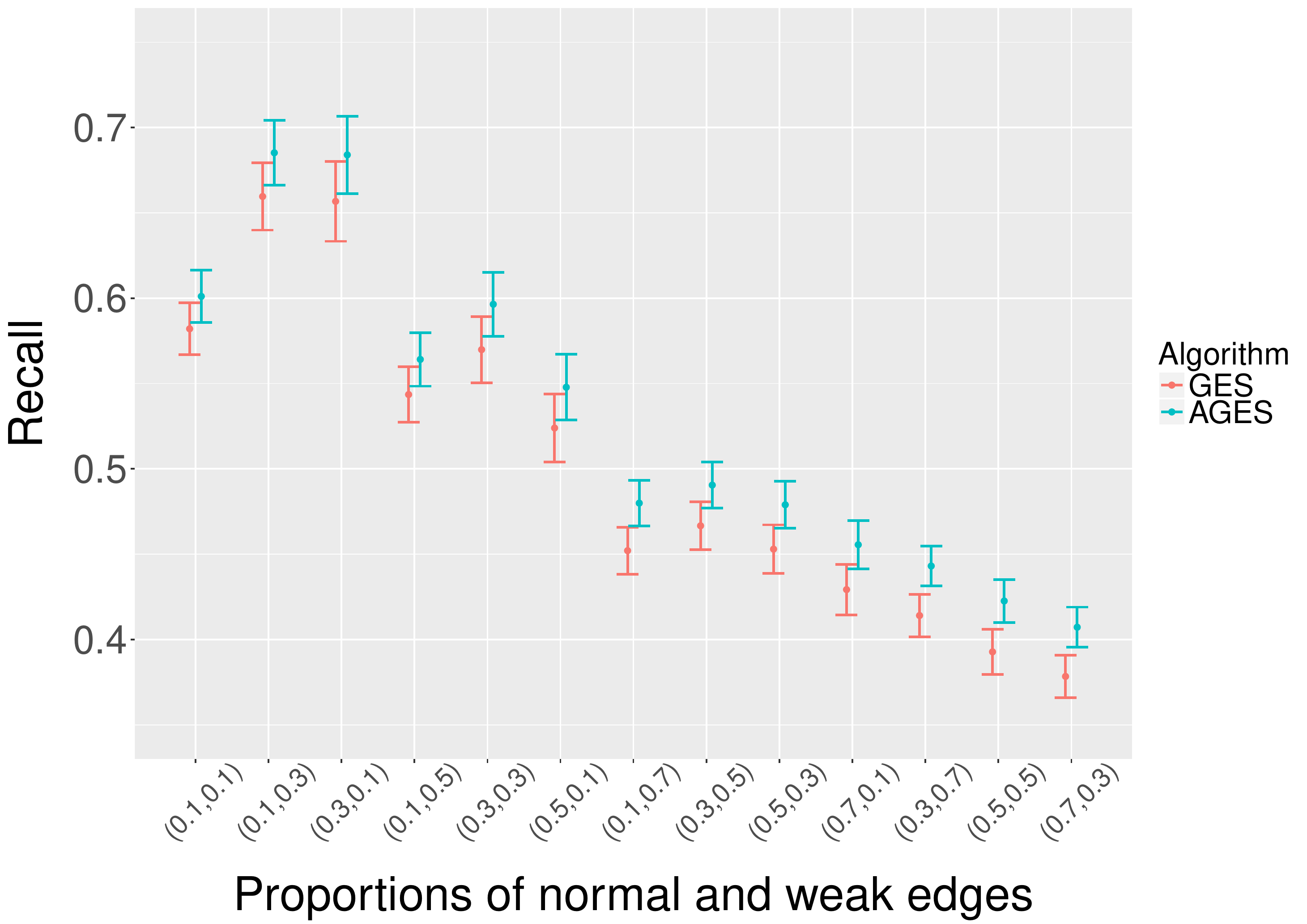}\\
   \vspace{0.5cm}
   \includegraphics[scale=0.25]{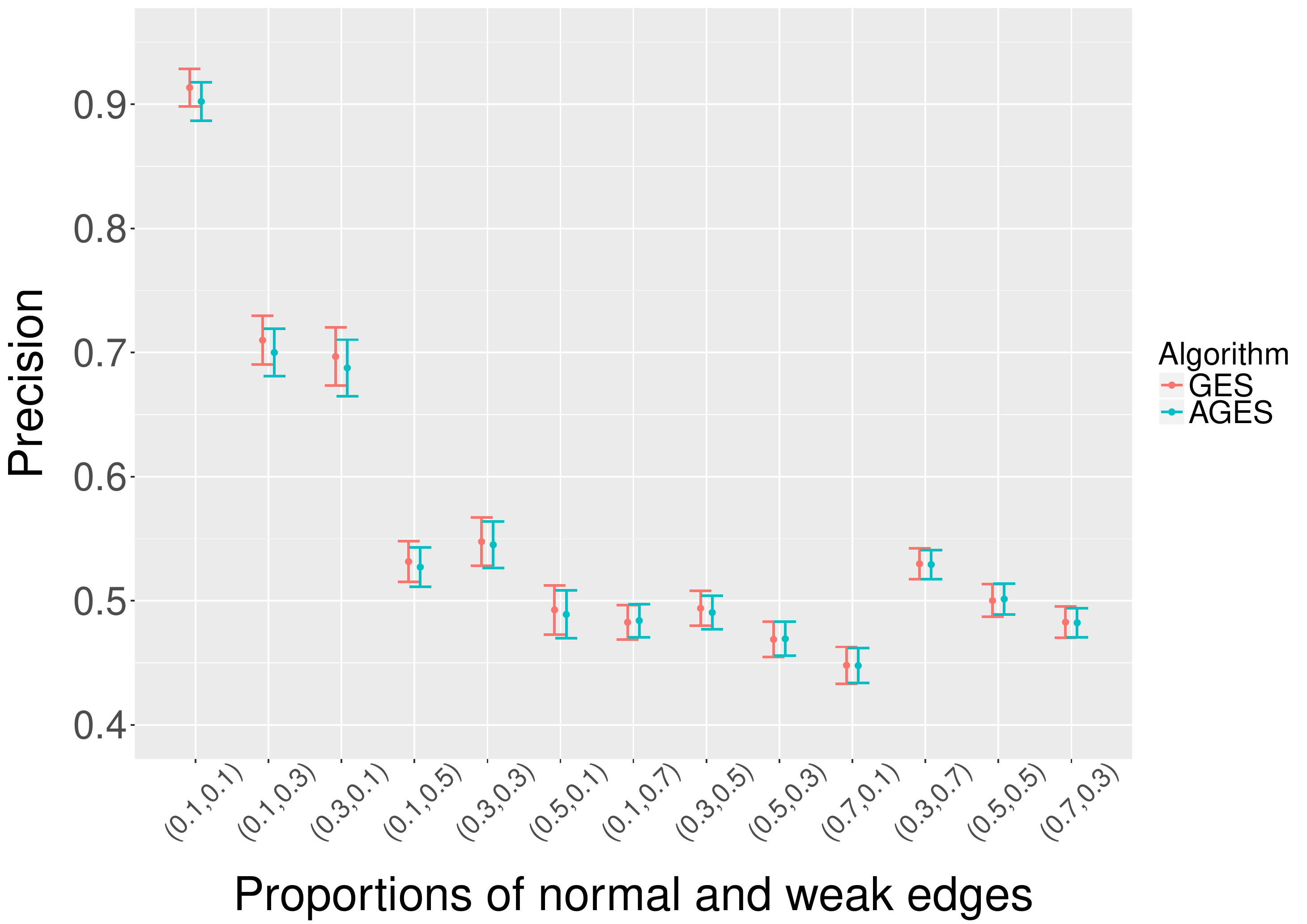}
   \caption{Mean precision and recall of GES and AGES over 500 simulations for all combinations of $(q_s,q_w) \in \{0.1, 0.3, 0.5, 0.7\}$ such that $q_s+q_w \leqslant 1$, using $\lambda=\log(n)/(2n)$ and $n=10000$ (see Section~\ref{subsection: data}). The bars in the plots correspond to $\pm$ twice the standard error of the mean.}
   \label{Figure: recall0505}
   \vspace{-.5cm}
\end{figure}

Figure~\ref{Figure: lambdas0505} compares the performance of GES and AGES for various choices of the penalty parameter $\lambda$ when $(q_s,q_w) = (0.3,0.7)$. In each case, we use the chosen penalty of GES as the minimum penalty of AGES, so that the skeletons of both outputs are identical. We see that AGES outperforms GES for all penalty parameters, and that AGES is less sensitive to the choice of the penalty parameter.

\begin{figure}[t]
   \centering
   \includegraphics[scale=0.25]{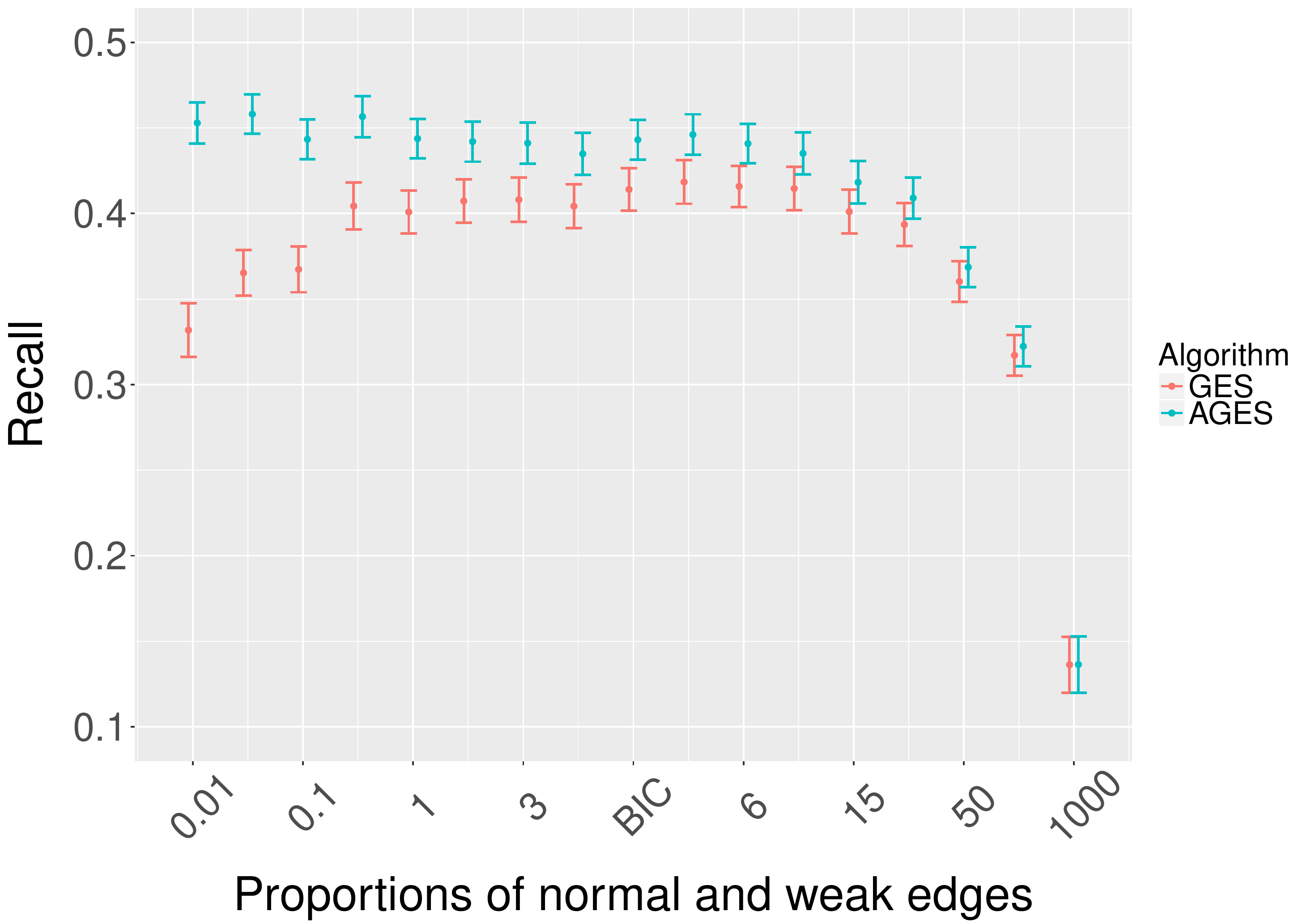}\\
   \vspace{0.5cm}
   \includegraphics[scale=0.25]{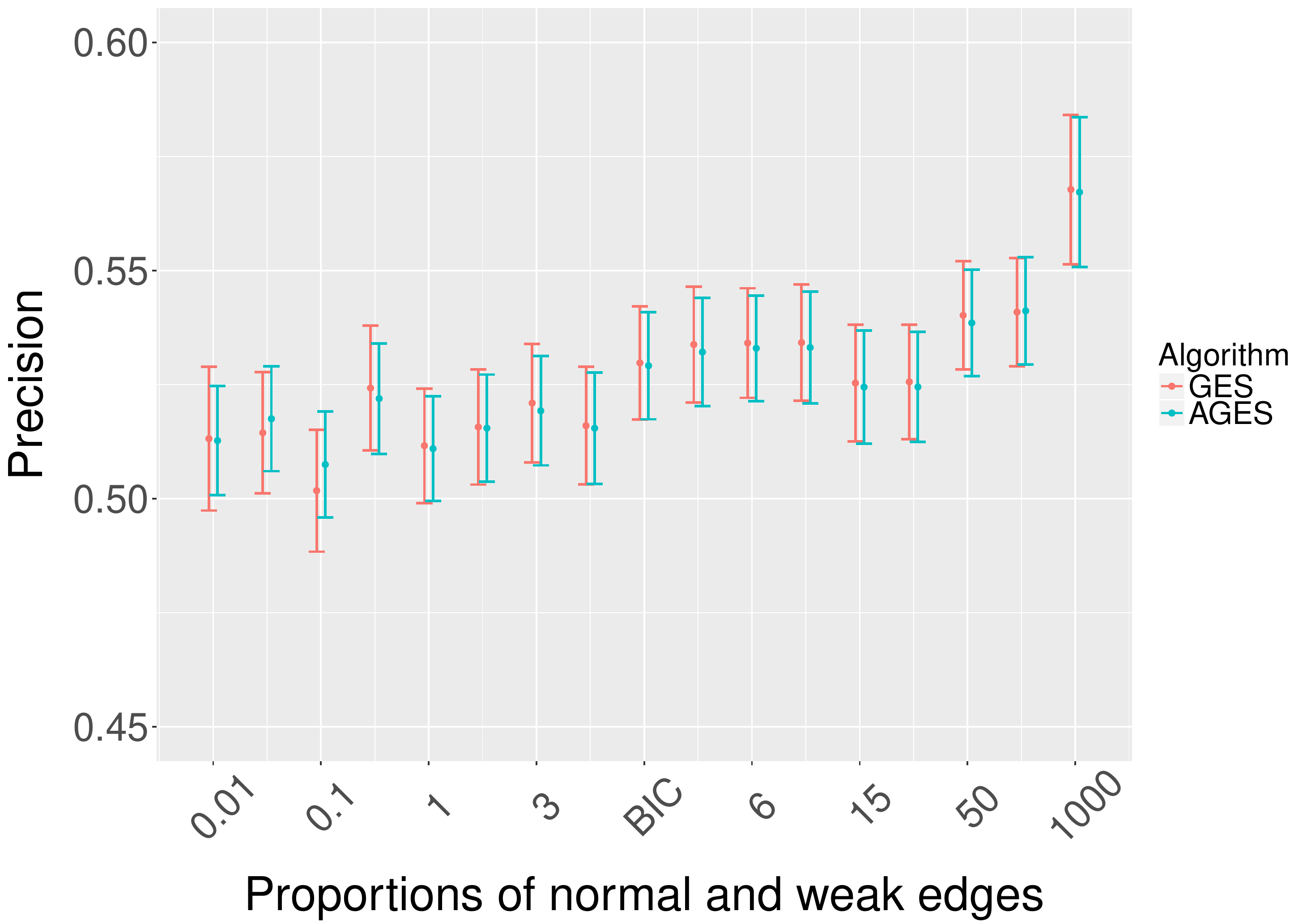}
   \caption{Mean precision and recall of GES and AGES over 500 simulations for $(q_s,q_w) = (0.3,0.7)$, using $n=10000$ and varying values of $\lambda$ (see Section~\ref{subsection: data}). The bars in the plots correspond to $\pm$ twice the standard error of the mean.}
   \label{Figure: lambdas0505}
\end{figure}

Figure~\ref{Figure: prec/recall1} compares GES and AGES for $q_s\in\{0.1,0.2,\ldots, 1\}$ and $q_w =0$, using again $\lambda=\log(n)(2n)$. We see that AGES outperforms GES in recall for all values of $q_s$. There tends to be a small loss in precision for the sparser graphs.

\begin{figure}[t]
   \centering
   \includegraphics[scale=0.25]{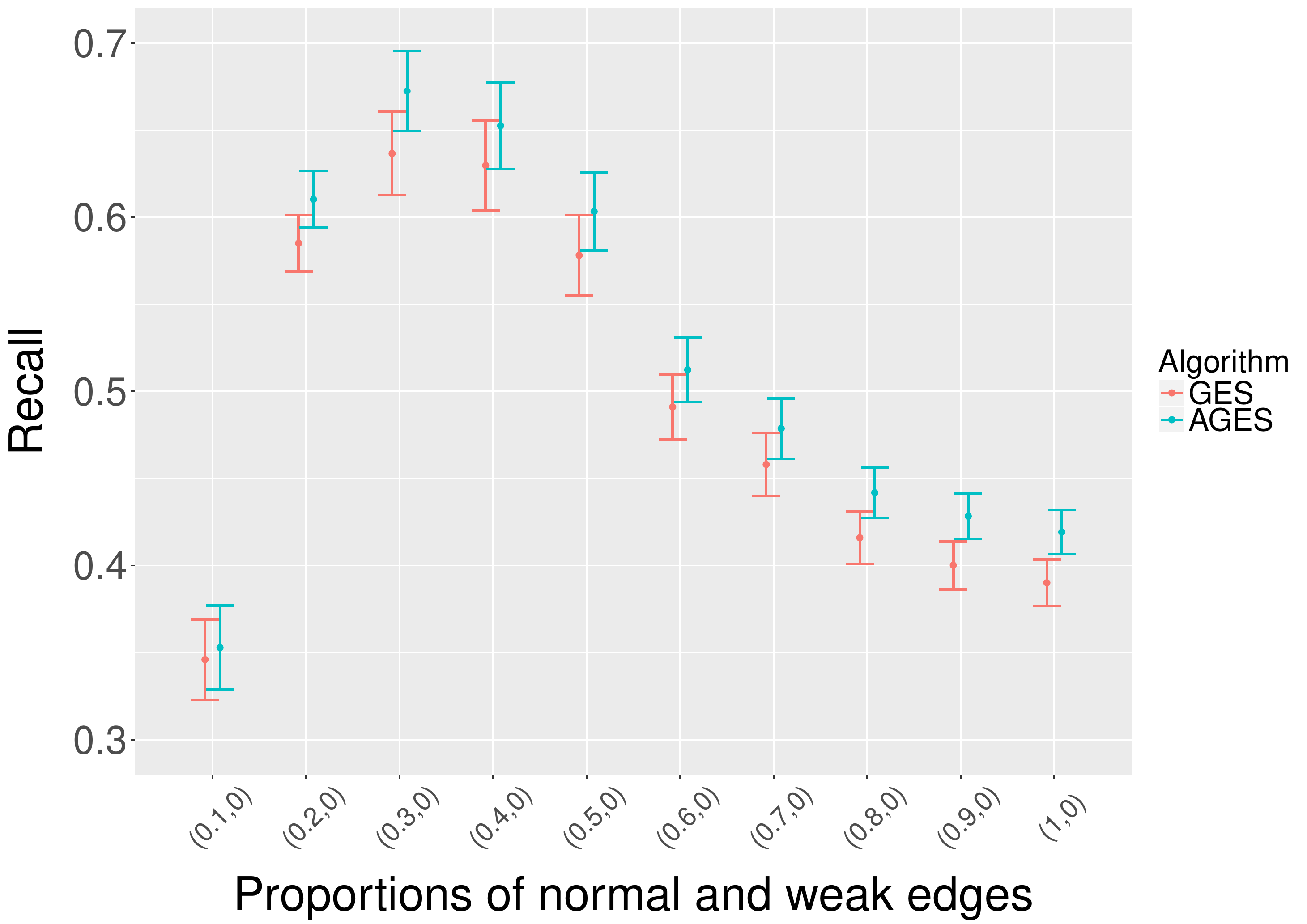}\\
   \vspace{0.5cm}
   \includegraphics[scale=0.25]{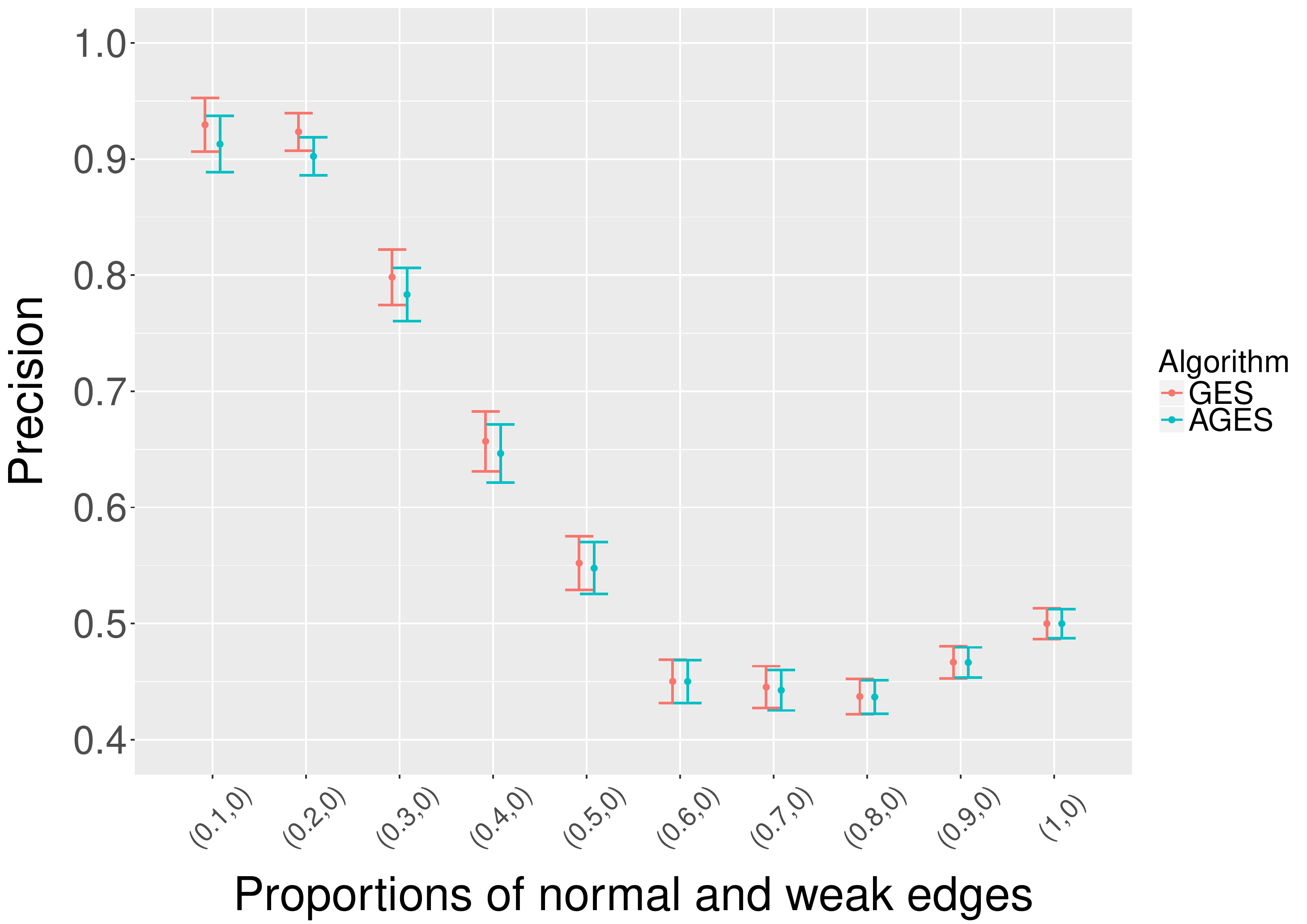}
   \caption{Mean precision and recall over 500 simulations with $q_s\in\{0.1,0.2,\ldots, 1\}$ and $q_w =0$, using $\lambda=\log(n)/(2n)$ and $n=10000$ (see Section~\ref{subsection: data}). The bars in the plots correspond to $\pm$ twice the standard error of the mean.}
   \label{Figure: prec/recall1}
\end{figure}

\subsection{APPLICATION TO SINGLE CELL DATA}\label{Section: Real data}

We apply AGES to the well-known single cell data of \cite{Sachs}, consisting of quantitative amounts of 11 proteins in human T-cells that were measured under 14 experimental conditions. In each experimental condition, different interventions were made, concerning the abundance or the activity of the molecules\footnote{An activity intervention can either activate or inhibit the molecule} \citep{Sachs,Mooij}. We analyze each experimental condition separately, yielding 14 data sets with sample sizes between 700 and 1000.

\cite{Sachs} presented a conventionally accepted signalling network for these proteins \citep[][Figures 2 and 3]{Sachs}. We use this to determine a ground truth for each experimental condition (see Section~\ref{Section: Real data suppmat} of the supplementary material), so that we can assess the performance of AGES in comparison to GES on these data.

Again, since the skeletons of the outputs of GES and AGES are identical by construction, we only evaluate the directed edges. Moreover, we limit ourselves to adjacencies that are present in the true network.
Considering these adjacencies, AGES found additional edge orientations in 6 experimental conditions.  Table~\ref{Figure: Tab} summarizes the results. In experimental conditions 8 and 9, AGES was able to substantially improve the output, while in the other 4 conditions (4, 5, 13 and 14), AGES and GES had roughly similar performances.
Thus, although these data almost certainly violate various assumptions of our methods (acyclicity, Gaussianity, path strong faithfulness, hidden confounders), we obtain encouraging results.

\begin{table}
   \begin{center}
   \begin{tabular}{ | c | c | c | c | c  | c | c | }
       \hline
       Experimental condition & 4 & 5 & 8 & 9 & 13 & 14 \\ \hline
       Correct & 0 & 1 & 8 & 5  & 1 & 1 \\ \hline
       Wrong 	& 1 & 0 & 0 & 0 & 2 & 0 \\
       \hline
     \end{tabular}
   \end{center}
   \caption{For each of the listed experimental conditions, we report the number of correct and wrong edge orientations among edge orientations that were found by AGES but not by GES. The results are limited to  adjacencies that are present in the true network (see Figure~\ref{Figure: PDAG real data} of the supplementary material), and correctness of edge orientations was evaluated with respect to this network. }
   \label{Figure: Tab}
   \vspace{-.5cm}
\end{table}


\section{DISCUSSION}\label{Section: Discussion}

We considered structure learning of linear Gaussian SEMs with weak edges. We presented a new graphical object, called APDAG, that aggregates the structural information of many CPDAGs, yielding additional orientation information. We proposed a structure learning algorithm that uses the solution path of GES to learn this new object and gave sufficient conditions for its soundness and consistency. The algorithm will be made available in the R-package \texttt{pcalg} \citep{Kalischpcalg}.

We applied AGES in a simulation study and on data from \cite{Sachs}. Despite the fact that in both cases the assumptions of Theorem~\ref{Theorem: Equality of the CPDAGs} are likely violated, we obtained promising results.

Our work can be easily extended to the so called nonparanormal distributions \citep{LiuEtAl09,Harris2013}. In this setting we assume that there is a latent linear Gaussian SEM and that each observed variable is a strictly increasing (or strictly decreasing) transformation of the corresponding latent variable. In this case, the weakness of an edge can be connected to its edge weight in the latent linear Gaussian SEM and we can use AGES with a rank correlation based scoring criterion as defined in \cite{Nandy}.

Moreover, the Gaussian error assumption can be dropped, i.e., we can consider linear SEMs with arbitrary error distributions. This is due to a one-to-one correspondence between zero partial correlations in a linear SEM with arbitrary error distributions and d-separations in its corresponding DAG \citep[e.g.,][]{hoyer08}. When all error variables are non-Gaussian, one can use the LiNGAM algorithm \citep{ShimizuEtAl06-JMLR} to recover the data generating DAG uniquely. In this case, one would therefore not run GES or AGES. If some error variables are Gaussian and others are non-Gaussian, \cite{hoyer08} proposed a combination of PC and LiNGAM. It would be an interesting direction for future work to combine (A)GES with LiNGAM for a mixture of Gaussian and non-Gaussian error variables.

\subsection{Acknowledgements}
This work was supported in part by the Swiss NSF Grant 200021\_172603.


\newpage

\setcounter{section}{0}
\setcounter{figure}{0}
\setcounter{table}{0}

\section*{SUPPLEMENT}

This is the supplement of the paper ``Structure Learning of Linear Gaussian Structural Equation Models with Weak Edges'', which  we refer to as the ``main paper''.

\section{PRELIMINARIES}\label{Section: Preliminaries SuppMat}

Two vertices $X_{i}$ and $X_{j}$ are \emph{adjacent} if there is an edge between them. A \emph{path} between $X_{i}$ and $X_{j}$ is a sequence $(X_{i}, \ldots, X_{j})$ of distinct vertices in which all pairs of successive vertices are adjacent. A \emph{directed path} is a path between $X_i$ and $X_j$ where all edges are directed towards $X_{j}$, i.e., $X_i \rightarrow \dots \rightarrow X_j$. A directed path from $X_{i}$ to $X_{j}$ together with the edge $X_{j} \rightarrow X_{i}$ forms a \emph{directed cycle.}
If $X_{i} \rightarrow X_{j} \leftarrow X_{k}$ is part of a path, then $X_{j}$ is a \emph{collider} on this path.

A vertex $X_{j}$ is a \emph{child} of the vertex $X_{i}$ if $X_{i} \rightarrow X_{j}.$  If there is a directed path from $X_{i}$ to $X_{j},$ $X_{i}$ is a \emph{descendant} of $X_{j}$, otherwise it is a non-descendant. We use the convention that $X_{i}$ is also a descendant of itself.

A DAG encodes conditional independence constraints through the concept of d-separation \citep{Pearl2009}.
For three pairwise disjoint subsets of vertices $A,$ $B,$ and $S$ of $X,$ $A$ is d-separated from $B$ by $S,$ $A\cid B \vert S,$ if every path between a vertex in $A$ and a vertex in $B$ is \emph{blocked} by $S$. A path between two vertices $X_{i}$ and $X_{j}$ is said to be \emph{blocked} by a set $S$ if a non-collider vertex on the path is present in $S$ or if there is a collider vertex on the path for which none of its descendants is in $S.$ If a path is not blocked it is \emph{open}.

The set of d-separation constraints encoded by a DAG $G$ is denoted by $\mathcal{I}(G)$. All DAGs in a Markov equivalence class encode the same set of d-separation constraints. Hence, for a CPDAG $C$, we let $\mathcal I(C) = \mathcal I(G)$, where $G$ is any DAG in
$C$. A DAG $G_1$ is an independence map (I-map) of a DAG $G_2$ if $\mathcal I(G_1) \subseteq \mathcal I(G_2)$, with an analogous definition for CPDAGs. A DAG $G_1$ is a perfect map of a DAG $G_2$ if $\mathcal I(G_1)=\mathcal I(G_2)$, again with an analogous definition for CPDAGs.

For the proof of Theorem~\ref{Theorem: Equality of the CPDAGs} of the main paper we make use of two lemmas of \cite{Nandy}.

\begin{lemma}\label{Lemma: deleting an important edge} (cf.\ Lemma 9.5 of the supplementary material of \cite{Nandy})
   Let $G = (X,E)$ be a DAG such that $X_i \to X_j \in E$. Let $G' = (X,E\setminus\{X_i \to X_j\})$. If $G$ is an I-map of a DAG $G_1$
   but $G'$ is not, then $X_i \ncid_{G_1} X_j \mid \Pa_{G'}(X_i)$.
\end{lemma}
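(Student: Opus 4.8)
The plan is to argue the contrapositive: I would \emph{assume} $X_i \cid_{G_1} X_j \mid \Pa_{G'}(X_i)$ and show that $G'$ is then an I-map of $G_1$, contradicting the hypothesis. Throughout I would use that $\mathcal I(G_1)$, being the d-separation model of a DAG, obeys the graphoid axioms (decomposition, weak union, contraction, intersection and composition), and that ``$G$ is an I-map of $G_1$'' transfers every d-separation of $G$ --- in particular every ordered/local Markov statement of $G$ --- into a valid d-separation in $G_1$. I would also record the structural consequence $\Skeleton(G_1) \subseteq \Skeleton(G)$: any adjacency present in $G_1$ forces a dependence for every conditioning set, which would then have to hold in $G$ as well.

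First I would reduce the I-map claim to a single node. Fix a topological order $\prec$ of $G$; since $G'$ differs from $G$ by deletion of one edge, $\prec$ is also topological for $G'$, and $\Pa_{G'}(v)=\Pa_G(v)$ for every $v\neq X_j$, while $\Pa_{G'}(X_j)=\Pa_G(X_j)\setminus\{X_i\}$. Checking the I-map property through the ordered Markov property with respect to $\prec$, the condition at each $v\neq X_j$ is literally the ordered Markov statement of $G$ at $v$, hence already holds in $G_1$. Thus $G'$ is an I-map of $G_1$ as soon as the ordered condition at $X_j$ holds in $G_1$. Writing $W$ for the set of $\prec$-predecessors of $X_j$ that are not parents of $X_j$ in $G$, the inherited statement $X_j \cid_{G_1} W \mid \Pa_G(X_j)$ combined with contraction shows that this remaining condition is equivalent to $X_i \cid_{G_1} X_j \mid \Pa_{G'}(X_j)$.

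The crux --- and the step I expect to be the main obstacle --- is therefore bridging the two conditioning sets: my assumption gives separation conditioning on $\Pa_{G'}(X_i)=\Pa_G(X_i)$ (the parents of the \emph{tail}), whereas the reduction needs separation conditioning on $\Pa_{G'}(X_j)$ (the parents of the \emph{head}). I would establish this implication as a self-contained step. Combining the assumption $X_i \cid_{G_1} X_j \mid \Pa_G(X_i)$ with the inherited local Markov statement of $G$ at $X_i$, namely $X_i \cid_{G_1} \mathrm{nd}_G(X_i)\setminus\Pa_G(X_i) \mid \Pa_G(X_i)$, via composition, separates $X_i$ from $X_j$ together with the co-parents of $X_j$; one then has to migrate the conditioning set from $\Pa_G(X_i)$ to $\Pa_G(X_j)\setminus\{X_i\}$ using intersection and contraction against further inherited separations. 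Here a case distinction seems unavoidable, on whether $G'$ still contains a directed path $X_i \rightsquigarrow X_j$ avoiding the deleted edge (equivalently, whether some parent of $X_j$ other than $X_i$ is a descendant of $X_i$ in $G$). When no such path exists, every co-parent of $X_j$ is a non-descendant of $X_i$ and the bridge closes cleanly; the remaining case must be handled by showing directly that the dependence in $G_1$ cannot be blocked, using $\Skeleton(G_1)\subseteq\Skeleton(G)$ and the v-structure compatibility forced by the I-map relation to control how any $X_i$--$X_j$ path in $G_1$ can enter $X_j$. I expect the verification of this last case, where orientations in $G_1$ may differ from those in $G$, to be the most delicate part of the argument.
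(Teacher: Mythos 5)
The delicate case you single out---a co-parent of $X_j$ that is a descendant of $X_i$, equivalently a directed path from $X_i$ to $X_j$ surviving in $G'$---is not merely delicate: it is a case in which the lemma \emph{as printed} is false, so no amount of care with the graphoid axioms can close your bridge from $\Pa_G(X_i)$ to $\Pa_{G'}(X_j)$. Take $G_1$ with edges $X_i \to C \leftarrow D \to X_j$, and let $G$ be the complete DAG with topological order $X_i, C, X_j, D$ (so $X_i \to X_j \in G$, $\Pa_{G'}(X_i)=\Pa_G(X_i)=\emptyset$, and $\Pa_{G'}(X_j)=\{C\}$). Being complete, $G$ is an I-map of $G_1$. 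In $G'$ one checks $X_i \cid_{G'} X_j \mid \{C\}$: the paths $X_i\to C\to X_j$ and $X_i\to C\to D\leftarrow X_j$ are blocked by the non-collider $C\in\{C\}$, while $X_i\to D\leftarrow X_j$ and $X_i\to D\leftarrow C\to X_j$ are blocked by the unconditioned, childless collider $D$. But in $G_1$, conditioning on the collider $C$ opens $X_i \to C \leftarrow D \to X_j$, so $X_i \ncid_{G_1} X_j \mid \{C\}$ and $G'$ is not an I-map of $G_1$: all hypotheses of the lemma hold. Yet $X_i \cid_{G_1} X_j \mid \Pa_{G'}(X_i) = \emptyset$, since the only path between $X_i$ and $X_j$ in $G_1$ has the unconditioned collider $C$. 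So the conclusion with the tail's parents fails, exactly in your case B, and the implication you hoped to establish there (tail separation implies head separation) is simply not true.

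The resolution is that the printed conditioning set is a typo: the conclusion should read $X_i \ncid_{G_1} X_j \mid \Pa_{G'}(X_j)$, which my example does satisfy. The paper offers no proof of its own---the lemma is imported from Lemma~9.5 of \cite{Nandy}---but both places where it is invoked use the head's parents: Claim~2 in the proof of Theorem~\ref{Theorem: Equality of the CPDAGs} concludes $X_{i} \ncid X_{j} \vert \Pa_{G^{\prime}}(X_{j})$, matching the score difference of Lemma~\ref{Lemma: 5.1 Preetam}, which involves $\rho_{X_i,X_j \vert \Pa_G(X_j)}$; likewise Claim~2 in the proof of Lemma~\ref{Lemma GES SF}. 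For the corrected statement, the first paragraph of your plan is already the entire proof and needs no bridge: fixing a topological order of $G$ (which is also one of $G'$), only the ordered Markov statement at $X_j$ distinguishes $G'$ from $G$, and contracting the inherited statement $X_j \cid_{G_1} \mathrm{pred}(X_j)\setminus \Pa_G(X_j) \mid \Pa_G(X_j)$ with $X_i \cid_{G_1} X_j \mid \Pa_{G'}(X_j)$ shows that $G'$ is an I-map of $G_1$ if and only if this last separation holds; its contrapositive is the lemma. As a smaller point, even your ``clean'' case A is not closed as stated: composition and weak union only get you to $X_i \cid_{G_1} X_j \mid \Pa_G(X_i)\cup\Pa_{G'}(X_j)$, and the semi-graphoid axioms provide no way to discharge $\Pa_G(X_i)$ from the conditioning set---further evidence that the tail-conditioned formulation is the wrong target rather than a provable one.
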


\begin{lemma}(cf.\ Lemma 5.1 of \cite{Nandy})\label{Lemma: 5.1 Preetam}
Let $G=(X,E)$ be a DAG such that $X_{i}$ is neither a descendant nor a parent of $X_{j}.$ Let $G^{\prime}=(X,E \cup \{ X_{i} \rightarrow X_{j} \}).$ If the distribution of $X$ is multivariate Gaussian, then the $\ell_{0}-$penalized log-likelihood score difference between $G^{\prime}$ and $G$ is\begin{multline*} S_{\lambda}(G^{\prime},X^{(n)}) - S_{\lambda}(G,X^{(n)}) \\ =\frac{1}{2}\log(1- \hat{\rho}^{2}_{X_{i},X_{j} \vert \Pa_{G}(X_{j})}) + \lambda.\end{multline*}
\end{lemma}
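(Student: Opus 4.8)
The plan is to reduce the stated score difference to a single \emph{local} likelihood comparison and then invoke a classical regression identity. First I would check that the hypotheses make the setup well defined: since $X_i$ is not a descendant of $X_j$ in $G$, adding the edge $X_i \to X_j$ creates no directed cycle, so $G'$ is again a DAG; and since $X_i$ is not a parent of $X_j$, we have $X_i \notin \Pa_G(X_j)$, so exactly one new edge is introduced and $|E_{G'}| = |E_G| + 1$. Consequently the penalty part of the score contributes precisely $\lambda\,(|E_{G'}| - |E_G|) = \lambda$ to the difference.

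Next I would exploit the fact that the score factorizes as a sum of local terms, one per vertex, plus the penalty. Passing from $G$ to $G'$ changes only the parent set of $X_j$, from $S := \Pa_G(X_j)$ to $S \cup \{X_i\}$, while every other vertex keeps its parent set. Hence all local terms except the one for $X_j$ cancel, giving
\begin{align*}
&S_{\lambda}(G',X^{(n)}) - S_{\lambda}(G,X^{(n)}) \\
&= -\frac{1}{n}\Big[\log L(X_j^{(n)},(S\cup\{X_i\})^{(n)}) - \log L(X_j^{(n)},S^{(n)})\Big] + \lambda.
\end{align*}
I would then evaluate each maximized conditional Gaussian log-likelihood. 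For a Gaussian linear regression of $X_j$ on a parent set $T$, profiling out the regression coefficients and the error variance yields the maximized value $-\tfrac{n}{2}\log(2\pi e) - \tfrac{n}{2}\log(\hat\sigma^2_{j\mid T})$, where $\hat\sigma^2_{j\mid T}$ is the least-squares residual variance. The additive constant $-\tfrac{n}{2}\log(2\pi e)$ is the same for $T = S$ and $T = S\cup\{X_i\}$, so it cancels, which explains why no constant survives in the final formula; the bracketed difference reduces to $\tfrac{n}{2}\log\!\big(\hat\sigma^2_{j\mid S}/\hat\sigma^2_{j\mid S\cup\{X_i\}}\big)$.

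The key step — and the one I expect to be the main obstacle — is the identity relating the drop in residual variance to the sample partial correlation, namely
\begin{equation*}
\hat\sigma^2_{j\mid S\cup\{X_i\}} = \hat\sigma^2_{j\mid S}\,\big(1 - \hat\rho^{2}_{X_i,X_j\mid S}\big).
\end{equation*}
Substituting this into the log-ratio gives $\tfrac{1}{n}[\,\cdots\,] = -\tfrac{1}{2}\log\!\big(1 - \hat\rho^{2}_{X_i,X_j\mid S}\big)$, and combining with the penalty term yields
\begin{equation*}
S_{\lambda}(G',X^{(n)}) - S_{\lambda}(G,X^{(n)}) = \tfrac{1}{2}\log\!\big(1 - \hat\rho^{2}_{X_i,X_j\mid \Pa_G(X_j)}\big) + \lambda,
\end{equation*}
as claimed. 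To establish the residual-variance identity I would use one of two standard routes: either a direct Schur-complement computation on the sample covariance matrix, which expresses the conditional variance of $X_j$ given $S \cup \{X_i\}$ in terms of the conditional variance given $S$ together with the partial correlation; or a recursive-regression (Gram--Schmidt) argument in which $X_i$ and $X_j$ are first residualized against $S$, after which $\hat\rho_{X_i,X_j\mid S}$ is identified as the ordinary correlation between the two residual vectors and the one-step variance reduction follows from the elementary single-predictor regression formula. Everything else in the argument is routine bookkeeping once this identity is in hand.
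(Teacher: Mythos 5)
Your proof is correct, and it is worth noting that the paper itself contains no proof of this statement: the lemma is imported verbatim as Lemma~5.1 of Nandy et al.\ (2015), so there is no in-paper argument to deviate from. Your route --- decomposability of the $\ell_0$-penalized Gaussian score so that only the local term of $X_j$ changes, the profiled conditional log-likelihood $-\tfrac{n}{2}\log(2\pi e)-\tfrac{n}{2}\log\hat\sigma^{2}_{j\mid T}$, and the residual-variance identity $\hat\sigma^{2}_{j\mid S\cup\{X_i\}}=\hat\sigma^{2}_{j\mid S}\bigl(1-\hat\rho^{2}_{X_i,X_j\mid S}\bigr)$ with $S=\Pa_G(X_j)$ --- is precisely the standard argument behind the cited result, and your sign bookkeeping, the penalty contribution $\lambda\,(\vert E_{G'}\vert-\vert E_G\vert)=\lambda$, and the use of the hypotheses (non-descendant for acyclicity of $G'$, non-parent so exactly one edge is added) are all handled correctly.
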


The last step of Algorithm~\ref{Algorithm: AggregateCPDAGs} of the main paper consists of MeekOrient. This step applies iteratively and sequentially the four rules depicted in Figure~\ref{Figure: Meek}. These orientation rules can lead to some additional orientations, and the resulting output is a maximally oriented PDAG \citep{Meek}. For an example of its utility see Example~\ref{Example: AGES} and Figure~\ref{Figure: Example AGES} in the main paper.

\begin{figure}[tb]
    \centering
    \hspace{0.5cm}
    \subfloat{
    \begin{tikzpicture}[scale=0.7]
       \node[node] (X) at (-1,1.5) {};
       \node[node] (Y) at (-1,0) {};
       \node[node] (Z) at (0.5,0) {};
       \draw[thick,->] (X) to(Y);
       \draw[-,blue] (Y) to (Z);
       \node[node] (Arrow) at (1,0.95) {R1};
       \node[node] (Arrow) at (1,0.55) {$\Rightarrow$};
       \node[node] (X2) at (1.5,1.5) {};
       \node[node] (Y2) at (1.5,0) {};
       \node[node] (Z2) at (3,0) {};
       \draw[thick,->] (X2) to(Y2);
       \draw[thick,->,blue] (Y2) to (Z2);
    \end{tikzpicture}
    }
    \hfill
    \subfloat{
    \begin{tikzpicture}[scale=0.7]
       \node[node] (X) at (-1,1.5) {};
       \node[node] (Y) at (-1,0) {};
       \node[node] (Z) at (0.5,0) {};
       \draw[thick,->] (Y) to(X);
       \draw[thick,->] (X) to(Z);
       \draw[-,blue] (Y) to (Z);
       \node[node] (Arrow) at (1,0.95) {R2};
       \node[node] (Arrow) at (1,0.55) {$\Rightarrow$};
       \node[node] (X2) at (1.5,1.5) {};
       \node[node] (Y2) at (1.5,0) {};
       \node[node] (Z2) at (3,0) {};
       \draw[thick,->] (Y2) to(X2);
       \draw[thick,->] (X2) to(Z2);
       \draw[thick,->,blue] (Y2) to (Z2);
    \end{tikzpicture}
    }
    \hspace{0.5cm}
    \vfill
    \hspace{0.5cm}
    \subfloat{
    \begin{tikzpicture}[scale=0.7]
       \node[node] (X) at (-1,1.5) {};
       \node[node] (Y) at (-1,0) {};
       \node[node] (Z) at (0.5,0) {};
       \node[node] (W) at (0.5,1.5) {};
       \draw[-] (X) to (Y);
       \draw[thick,->] (Y) to (Z);
       \draw[-] (X) to (W);
       \draw[thick,->] (W) to (Z);
       \draw[-,blue] (X) to (Z);
       \node[node] (Arrow) at (1,0.95) {R3};
       \node[node] (Arrow) at (1,0.55) {$\Rightarrow$};
       \node[node] (X2) at (1.5,1.5) {};
       \node[node] (Y2) at (1.5,0) {};
       \node[node] (Z2) at (3,0) {};
       \node[node] (W2) at (3,1.5) {};
       \draw[-] (X2) to (Y2);
       \draw[thick,->] (Y2) to (Z2);
       \draw[-] (X2) to (W2);
       \draw[thick,->] (W2) to (Z2);
       \draw[thick,->,blue] (X2) to (Z2);
    \end{tikzpicture}
    }
    \hfill
    \subfloat{
    \begin{tikzpicture}[scale=0.7]
       \node[node] (X) at (-1,1.5) {};
       \node[node] (Y) at (-1,0) {};
       \node[node] (Z) at (0.5,0) {};
       \node[node] (W) at (0.5,1.5) {};
       \draw[-,blue] (X) to (Y);
       \draw[thick,<-] (Y) to (Z);
       \draw[-] (X) to (W);
       \draw[thick,->] (W) to (Z);
       \draw[-] (X) to (Z);
       \node[node] (Arrow) at (1,0.95) {R4};
       \node[node] (Arrow) at (1,0.55) {$\Rightarrow$};
       \node[node] (X2) at (1.5,1.5) {};
       \node[node] (Y2) at (1.5,0) {};
       \node[node] (Z2) at (3,0) {};
       \node[node] (W2) at (3,1.5) {};
       \draw[thick,->,blue] (X2) to (Y2);
       \draw[thick,<-] (Y2) to (Z2);
       \draw[-] (X2) to (W2);
       \draw[thick,->] (W2) to (Z2);
       \draw[-] (X2) to (Z2);
    \end{tikzpicture}
    }
    \hspace{0.5cm}
    \caption{The four orientation rules from \cite{Meek}. If a PDAG contains one of the graphs on the left-hand-side of the four rules, then orient the blue edge as shown on the right-hand-side.}
    \label{Figure: Meek}
\end{figure}
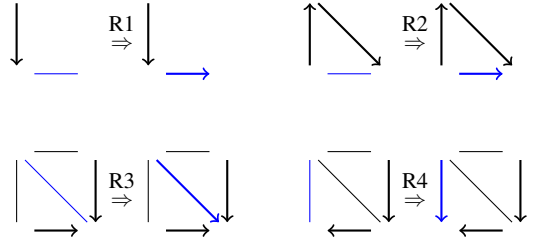

\section{PROOFS}\label{Section: Proofs}
\subsection{PROOF OF THEOREM~\ref{Theorem: Equality of the CPDAGs} OF THE MAIN PAPER}


We first establish the following Lemma.

\begin{lemma}\label{Lemma: Skeleton condition}
   Consider two CPDAGs $C_{1}$ and $C_{2}$ where $C_{1}$ is an I-map of $C_{2}.$ If $C_{1}$ and $C_{2}$ have the same skeleton, then $C_{1}$ is a perfect map of $C_{2}.$
\end{lemma}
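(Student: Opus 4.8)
The plan is to pass from the two CPDAGs to ordinary DAG representatives and then invoke Chickering's transformational characterization of independence maps. First I would fix a DAG $G_1$ in the Markov equivalence class represented by $C_1$ and a DAG $G_2$ in the class represented by $C_2$. By the convention $\mathcal{I}(C_i)=\mathcal{I}(G_i)$, the I-map hypothesis $\mathcal{I}(C_1)\subseteq\mathcal{I}(C_2)$ becomes $\mathcal{I}(G_1)\subseteq\mathcal{I}(G_2)$, i.e.\ $G_1$ is an I-map of $G_2$. Since $C_1$ and $C_2$ share their skeleton, and a CPDAG has the same skeleton as each of its member DAGs, the DAGs $G_1$ and $G_2$ have the same skeleton and hence the same number of edges, say $m$.

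Next I would apply the transformational result of \cite{ChickeringGES}: because $G_1$ is an I-map of $G_2$, there is a finite sequence of operations turning $G_2$ into $G_1$, where each operation is either a covered edge reversal or an edge addition, the intermediate graphs all remain DAGs, and $G_1$ stays an I-map throughout. The decisive observation is that along this sequence the edge count is non-decreasing: a covered edge reversal leaves it unchanged, whereas an edge addition strictly increases it. Since the sequence starts and ends with $m$ edges, it can contain no edge additions. Therefore $G_2$ is transformed into $G_1$ by covered edge reversals alone.

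To finish, I would use the standard fact that a covered edge reversal keeps a DAG within its Markov equivalence class, since it alters neither the skeleton nor the set of v-structures. Applying this along the reversal sequence shows that $G_1$ and $G_2$ are Markov equivalent, so $\mathcal{I}(G_1)=\mathcal{I}(G_2)$, and hence $\mathcal{I}(C_1)=\mathcal{I}(C_2)$; that is, $C_1$ is a perfect map of $C_2$.

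The heart of the argument, and the only place where the same-skeleton hypothesis is really used, is the edge-counting step that forces the number of edge additions to be zero; once additions are excluded, equivalence follows immediately from the preservation of the equivalence class under covered reversals. A purely graphical alternative would be to argue that an I-map with the same skeleton can neither destroy nor create a v-structure and then invoke the Verma--Pearl criterion, but establishing the v-structure claim directly runs into the usual complication that d-separation depends on all paths simultaneously; the transformational route sidesteps this and is the cleaner obstacle to manage.
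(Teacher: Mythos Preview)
Your argument is correct, but it takes a genuinely different route from the paper's. You invoke Chickering's transformational characterization of I-maps (Meek's conjecture), count edges to rule out additions, and conclude Markov equivalence via covered reversals. The paper instead does exactly the ``purely graphical alternative'' you mention and set aside: it fixes DAG representatives $G_1,G_2$ and compares v-structures directly. For a v-structure $X_i\to X_j\leftarrow X_k$ present in one DAG but not the other, it uses the local Markov property at $X_k$ (taking $X_i$ a non-descendant of $X_k$ in $G_1$) to produce the specific d-separation $X_i\cid_{G_1}X_k\mid \Pa_{G_1}(X_k)$, and then exhibits the single path $(X_i,X_j,X_k)$ as open in the other DAG, contradicting the I-map relation. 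Your worry that ``d-separation depends on all paths simultaneously'' is handled there by the local Markov property on one side and a one-path d-connection witness on the other. Your route is slicker once Chickering's theorem is granted, and the edge-count trick is elegant; the paper's route is more elementary and self-contained, avoiding reliance on a result whose own proof is considerably more involved than the lemma at hand.
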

\begin{proof}
   Let $G_1$ and $G_2$ be arbitrary DAGs in the Markov equivalence classes described by $C_1$ and $C_2$, respectively. Then $C_1$ is a perfect map of $C_2$ if and only if $G_1$ and $G_2$ have the same skeleton and the same v-structures \citep{VermaPearl}. Since $G_1$ and $G_2$ have the same skeleton by assumption, we only need to show that they have identical v-structures.

   Suppose first that there is a v-structure $X_i \to X_j \leftarrow X_k$ in $G_1$ that is not present in $G_2$. Since $G_1$ and $G_2$ have the same skeleton, this implies that $X_j$ is a non-collider on the path $(X_i,X_j,X_k)$ in $G_2$.

   We assume without loss of generality that $X_i$ is a non-descendant of $X_k$ in $G_1$.
   Then, $X_{i} \cid_{G_{1}} X_{k} \vert \Pa_{G_{1}}(X_{k})$, where $X_j \notin \Pa_{G_{1}}(X_{k})$. On the other hand, we have $X_{i} \ncid_{G_{2}} X_{k} \vert \Pa_{G_{1}}(X_{k})$, since the path $(X_{i},X_{j},X_{k})$ is open in $G_2$, since $X_{j} \notin \Pa_{G_{1}}(X_{k})$. This contradicts that $C_1$ is an I-map of $C_2$.

   Next, suppose that there is a v-structure $X_i \to X_j \leftarrow X_k$ in $G_2$ that is not present in $G_1$. Since $G_1$ and $G_2$ have the same skeleton, this implies that $X_j$ is a non-collider on the path $(X_i,X_j,X_k)$ in $G_1$.

   We again assume without loss of generality that $X_i$ is a non-descendant of $X_k$ in $G_1$. Then the path $(X_i,X_j,X_k)$ has one of the following forms: $X_i \to X_j \to X_k$ or $X_i \leftarrow X_j \to X_k$. In either case, $X_j \in Pa_{G_1}(X_k)$. Hence, $X_i \cid_{G_1} X_k | \Pa_{G_1}(X_k)$, where $X_j\in \Pa_{G_1}(X_k)$. But $X_i$ and $X_k$ are d-connected in $G_2$ by any set containing $X_j$. This again contradicts that $C_1$ is an I-map of $C_2$.
\end{proof}

\begin{proof}[Proof of Theorem~\ref{Theorem: Equality of the CPDAGs} of the main paper]

We need to prove that the CPDAGs in Step~\ref{Step 1} of the main paper and the CPDAGs in Step~\ref{Step 3} of the main paper coincide, i.e., $\mathcal{C}=\tilde{\mathcal{C}}.$
We prove this result for one of the CPDAGs. Take for instance $\tilde{C}_{\ell}$, $1\leqslant \ell \leqslant k$, the CPDAG of $G_\ell=(V,E_\ell).$
Note that $G_\ell$ is not a perfect map of the distribution of $X,$ and therefore we cannot directly use the proof of \cite{ChickeringGES2}. We can still use the main idea though, in combination with Lemma~\ref{Lemma: 5.1 Preetam}.

Consider running GES with penalty parameter $\lambda=-1/2 \log(1-\delta_{\ell}^2)$ and denote by $C^{f}$ and $C^{b}$ the output of the forward and backward phase, respectively.

Claim 1: $C^{f}$ is an I-map of $\tilde{C}_{\ell}$ i.e., all d-separation constraints true in $C^{f}$ are also true in $\tilde{C}_{\ell}.$

Proof of Claim 1:\\
Assume this is not the case, then there are two vertices $X_{i},X_{j} \in X$ and a DAG $G^{f} \in C^{f}$ such that $X_{i} \cid_{G^{f}} X_{j} \vert \{\Pa_{G^{f}}(X_{j})\setminus X_{i}\}$ but $X_{i} \ncid_{C_{\ell}} X_{j} \vert \{\Pa_{G^{f}}(X_{j})\setminus X_{i}\}.$ Because of the $\delta_{\ell}$-strong faithful condition, $\vert \rho_{X_{i},X_{j} \vert \Pa_{G^{f}}(X_{j})} \vert > \delta_{\ell}.$ Thus, adding this edge would improve the score. This is a contradiction to the GES algorithm stopping here.

Claim 2: $C^{b}$ is an I-map of $\tilde{C}_{\ell}$ i.e., all d-separation constraints true in $C^{b}$ are also true in $\tilde{C}_{\ell}.$

Proof of Claim 2:
By Claim 1 the backward phase starts with an I-map of $\tilde{C}_{\ell}.$ Suppose it ends with a CPDAG that is not an I-map of $\tilde{C}_{\ell}.$ Then, at some point there is an edge deletion which turns a DAG $G$ that is an I-map of $G_{\ell}$ into a DAG $G^{\prime}$ that is no longer an I-map of $G_{\ell}.$ Suppose the deleted edge is $(X_{i},X_{j}).$ By Lemma~\ref{Lemma: deleting an important edge}, we have $X_{i} \ncid_{G_{\ell}} X_{j} \vert \{\Pa_{G^{\prime}}(X_{j})\}$.
Hence, again because of the $\delta_{\ell}-$strong faithfulness condition, $\vert \rho_{X_{i},X_{j} \vert \Pa_{G'}(X_{j})} \vert > \delta.$ Thus, deleting this edge would worsen the score. This is a contradiction to the GES algorithm deleting this edge.

Claim 3: $C^{b}=\tilde{C}_{\ell},$ i.e., $C^{b}$ is a perfect map of $\tilde{C}_{\ell}.$

This claim follows from Lemma~\ref{Lemma: Skeleton condition} since we know from the previous claim that $C^{b}$ is an I-map of $\tilde{C}_{\ell}$ and by construction the skeletons of $C^{b}$ and $\tilde{C}_{\ell}$ are the same.

It follows from $\mathcal{C}=\tilde{\mathcal{C}}$ that $\AggregateCPDAGs(\mathcal{C})=\AggregateCPDAGs(\tilde{\mathcal{C}}).$
\end{proof}

%

\subsection{PROOF OF THEOREM~\ref{Theorem: Consistency} OF THE MAIN PAPER}

Recall that AGES combines a collection of CPDAGs obtained in the solution path of GES, where the largest CPDAG corresponds to the BIC penalty with $\lambda=\log(n)/(2n)$. In the consistency proof of GES with the BIC penalty, \cite{ChickeringGES2} used the fact that the penalized likelihood scoring criterion with the BIC penalty is locally consistent as $\log(n)/(2n) \rightarrow 0$. We note that the other penalty parameters involved in the computation of the solution path of GES do not converge to zero. This prevents us to obtain a proof of Theorem~\ref{Theorem: Consistency} of the main paper by applying the consistency result of \cite{ChickeringGES2}. A further complication is that the choices of the penalty parameters in the solution path of GES depend on the data.

In order to prove Theorem~\ref{Theorem: Consistency} of the main paper, we rely on the soundness of the oracle version of AGES (Theorem~\ref{Theorem: Equality of the CPDAGs} of the main paper). In fact, we prove consistency of AGES by showing that the solution path of GES coincides with its oracle solution path as the sample size tends to infinity. Since the number of variables is fixed and the solution path of GES depends only on the partial correlations (see Lemma~\ref{Lemma: 5.1 Preetam} and Section~\ref{Section: Computation} of the main paper), the consistency of AGES will follow from the consistency of the sample partial correlations.

\begin{proof}[Proof of Theorem~\ref{Theorem: Consistency} of the main paper]
   Given a scoring criterion, each step of GES depends on the scores of all DAGs on $p$ variables through their ranking only, where each step in the forward (backward) phase corresponds to improving the current ranking as much as possible by adding (deleting) a single edge. Let $\hat{\boldsymbol \rho}$ denote a vector consisting of the absolute values of all sample partial correlations $\hat{\rho}_{X_i,X_j|S}$, $1\leq i \leq j \leq p$ and $S \subseteq X\setminus \{X_i,X_j\}$,  in some order. It follows from Lemma~\ref{Lemma: 5.1 Preetam} that 
   the solution path of GES (for $\lambda \ge \log(n)/(2n)$) solely depends on the ranking of the elements in $\hat{\boldsymbol \gamma}$, where $\hat{\boldsymbol \gamma}$ contains the elements of $\hat{\boldsymbol \rho}$ appended with $(1-n^{-1/n})^{1/2}$, where the last element results from solving $-\log(1-\rho^2)/2 = \log(n)/(2n)$ for $\rho$.

   Similarly, an oracle solution path of GES solely depends on a ranking of the elements in $\boldsymbol \gamma$, where $\boldsymbol \gamma$ contains the elements of $\boldsymbol \rho$ appended with the value $0$, and $\boldsymbol \rho$ denotes a vector consisting of the absolute values of all partial correlations in the same order as in $\hat{\boldsymbol \rho}$. Note that there can be more than one oracle solution paths of GES depending on a rule for breaking ties. We will write $\rank\left(\hat{\boldsymbol \gamma}\right) = \rank(\boldsymbol \gamma)$ if $\rank\left(\hat{\boldsymbol \gamma}\right)$ equals a ranking of $\boldsymbol \gamma$ with some rule for breaking ties.

   Finally, we define
   \vspace{-0.05in}
   \begin{align*}
   \epsilon = \min \left\{\left| |\rho_{X_{i_1},X_{j_1}|S_1}| - |\rho_{X_{i_2},X_{j_2}|S_2}| \right| :\right. \\ \left. |\rho_{X_{i_1},X_{j_1}|S_1}| \neq  |\rho_{X_{i_2},X_{j_2}|S_2}| \right\},
   \end{align*}
   where the minimum is taken over all $1\le i_1< j_1 \le p$, $S_1\subseteq X \setminus \{X_{i_1},X_{j_1}\}$, $1\le i_2 < j_2 \le p$ and $S_2\subseteq X \setminus \{X_{i_2},X_{j_2}\}$.
   Therefore, it follows from Theorem~\ref{Theorem: Equality of the CPDAGs} of the main paper and the consistency of the sample partial correlations that
   \begin{align*}
   &~\mathbb{P}\left(\AGES(X^{(n)}) \neq A_{0} \right) \\
   \leq &~\mathbb{P}\left(\rank\left(\hat{\boldsymbol \gamma}\right) \neq \rank(\boldsymbol \gamma)\right) \\
   \leq&~ \sum_{\substack{\text{$1\leq i < j \leq p$,}\\  \\ \text{$S \subseteq X\setminus \{X_i,X_j\}$} } }\mathbb{P}\left( \left| |\hat{\rho}_{X_i,X_j|S}| - |\rho_{X_i,X_j|S}| \right| \geq \epsilon/2 \right)
   \end{align*}
   converges to zero as the sample size tends to infinity.
\end{proof}

\section{ADDITIONAL SIMULATION RESULTS WITH $p=10$}\label{Section: other empirical results}

We also ran AGES on the settings described in the main paper but with smaller sample sizes. Figures~\ref{Figure: Sample size 100} and \ref{Figure: Sample size 1000} show the results for $n=100$ and $n=1000$, respectively, based on 500 simulations per setting.

\begin{figure}[tb]
   \centering
   \includegraphics[scale=0.25]{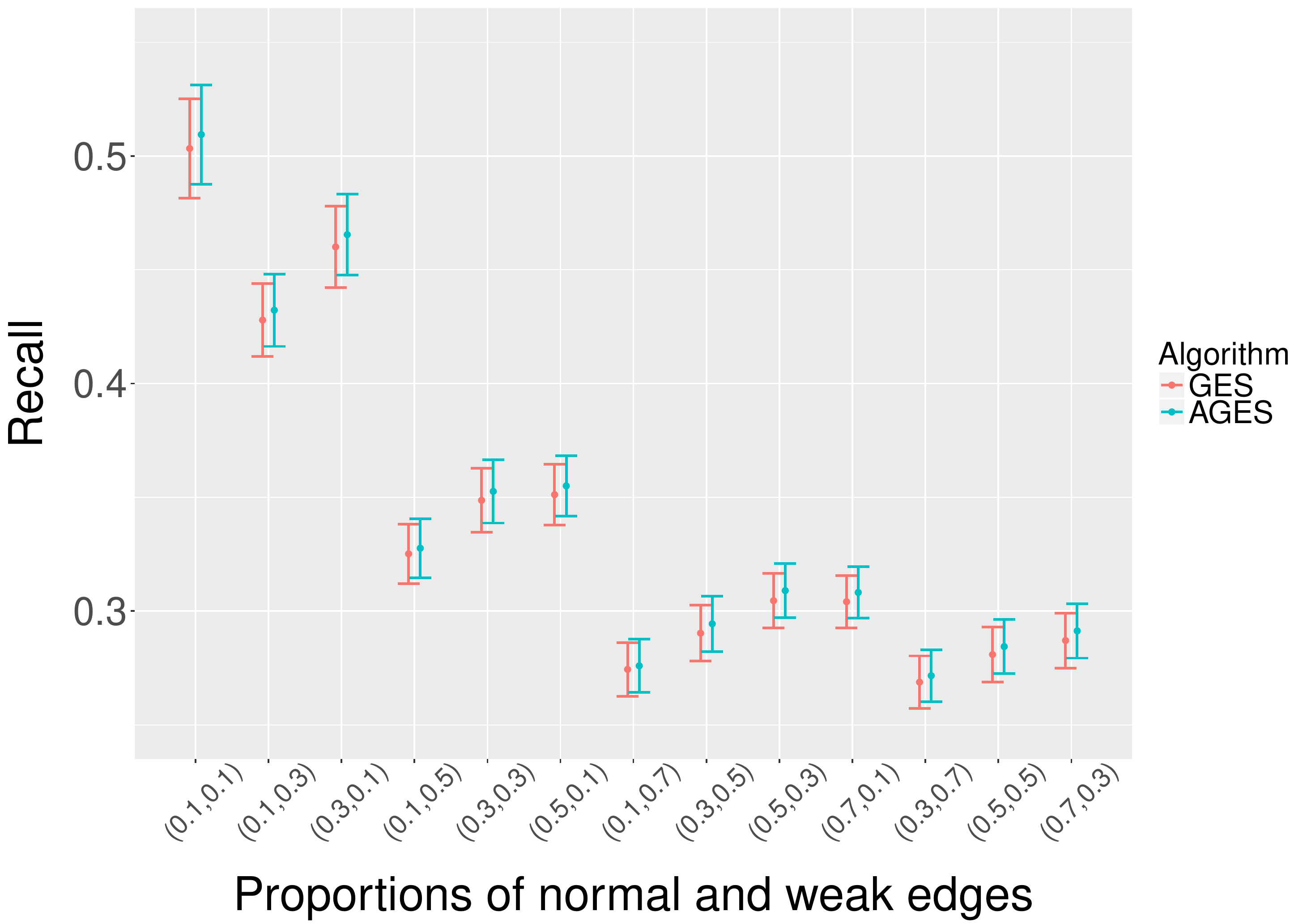}\\
   \vspace{0.5cm}
   \includegraphics[scale=0.25]{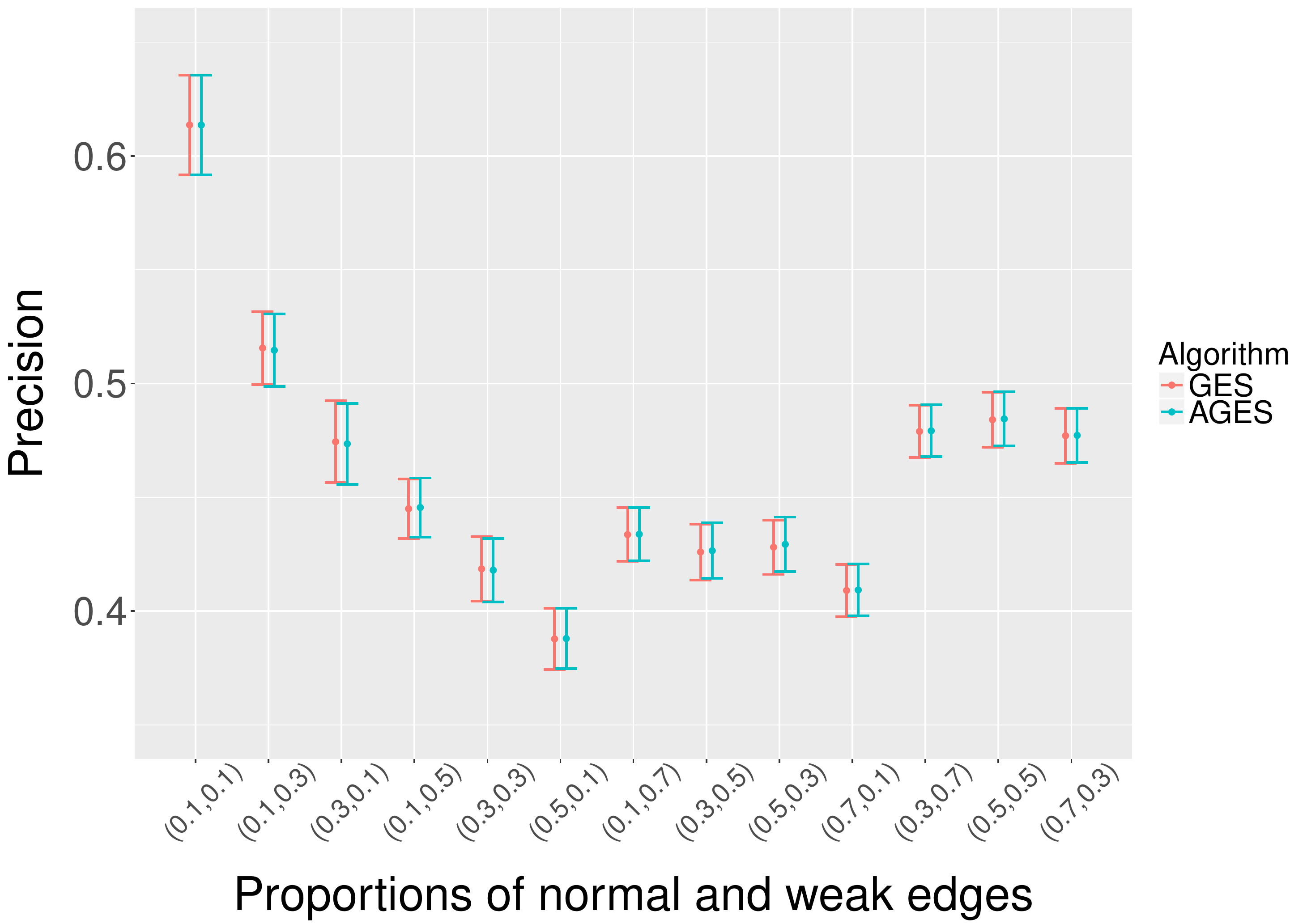}
   \caption{Mean precision and recall of GES and AGES over 500 simulations for all combinations of $(q_s,q_w) \in \{0.1, 0.3, 0.5, 0.7\}$ such that $q_s+q_w \leqslant 1$, for $p=10$, $\lambda=\log(n)/(2n)$ and $n=100$ (see Section~\ref{Section: other empirical results}). The bars in the plots correspond to $\pm$ twice the standard error of the mean.}
   \label{Figure: Sample size 100}
\end{figure}

\begin{figure}[tb]
   \centering
   \includegraphics[scale=0.25]{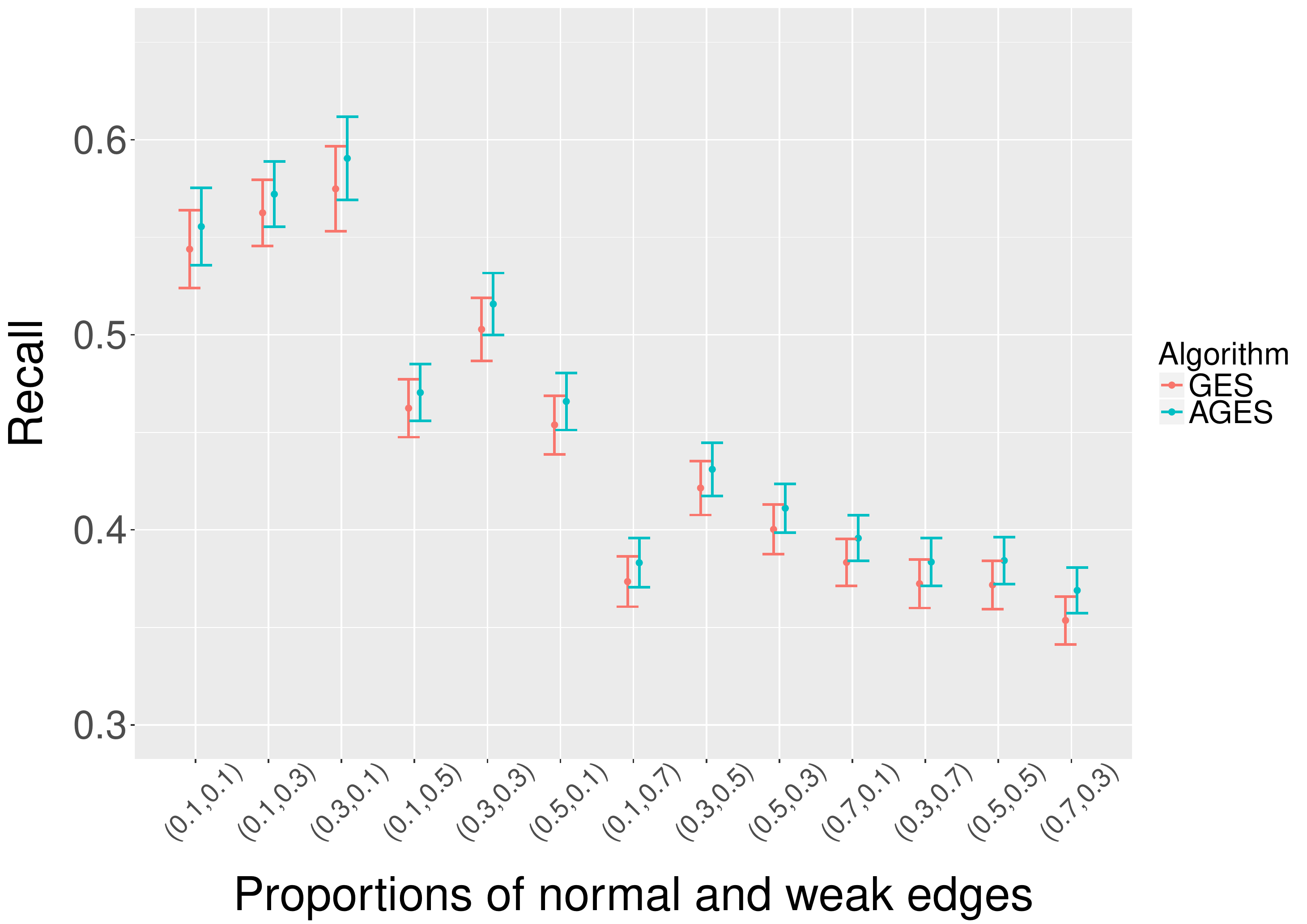}\\
   \vspace{0.5cm}
   \includegraphics[scale=0.25]{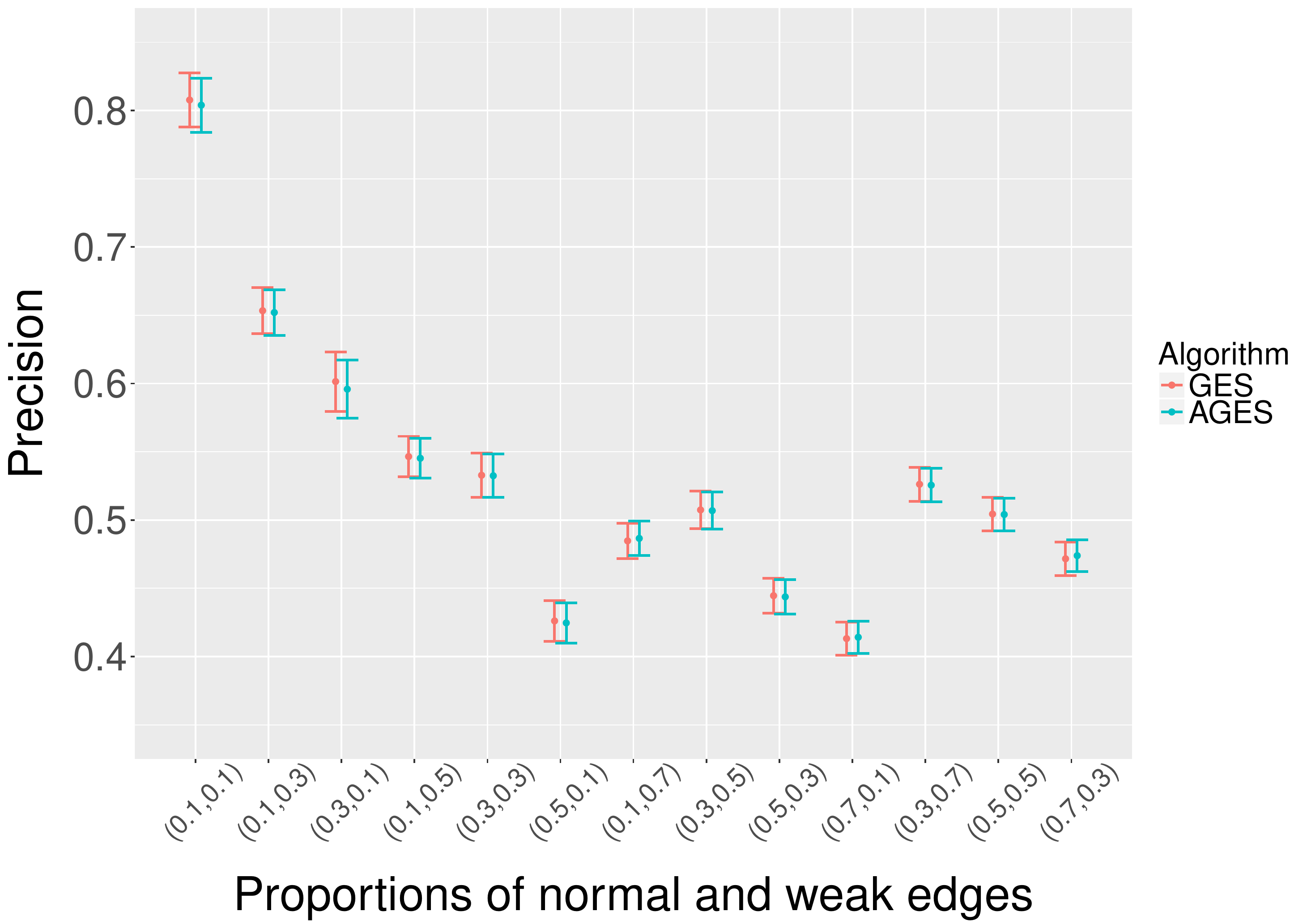}
   \caption{Mean precision and recall of GES and AGES over 500 simulations for all combinations of $(q_s,q_w) \in \{0.1, 0.3, 0.5, 0.7\}$ such that $q_s+q_w \leqslant 1$, for $p=10$, $\lambda=\log(n)/(2n)$ and $n=1000$ (see Section~\ref{Section: other empirical results}).The bars in the plots correspond to $\pm$ twice the standard error of the mean.}
   \label{Figure: Sample size 1000}
\end{figure}

For the larger sample size, $n=1000$, we see that we still gain in recall and that the precision remains roughly constant. For the smaller sample size, $n=100$, the differences become minimal. In all cases AGES performs at least as good as GES.

With a sample size of $100$ we expect to detect only partial correlations with an absolute value larger than $0.21.$ This can be derived solving $1/2\log(1-\rho^2)=-\log(n)/(2n)$ for $\rho.$ This limits the possibility of detecting weak edges, and if an edge is not contained in the output of GES it is also not contained in the output of AGES. This explains why we do not see a large improvement with smaller sample sizes. However, AGES then simply returns an APDAG which is very similar, or identical, to the CPDAG returned by GES.

\section{FURTHER SIMULATION RESULTS WITH $p=100$}
\label{Section: Simulations with p=100}

We randomly generated 500 DAGs consisting of 10 disjoint blocks of complete DAGs, where each block contains strong and weak edges with concentration probabilities $(q_s,q_w) = (0.3,0.7)$. The absolute values of the strong and weak edge weights are drawn from Unif(0.8,1.2) and Unif(0.1,0.3), respectively. The sign of each edge weight is chosen to be positive or negative with equal probabilities. The variance of the error variables are drawn from Unif(0.5,1.5).

This setting leads more often to a violation of the skeleton condition of Algorithm~\ref{Algorithm: Main(sample)} of the main paper, i.e.,\ the skeleton of the output of GES with $\lambda > \log(n)/(2n)$ is not a subset of the skeleton of the output of GES with $\lambda = \log(n)/(2n)$. This results in almost identical outputs of GES and AGES. In order to alleviate this issue, in each step of AGES with $\lambda > \log(n)/(2n)$, we replace GES with the ARGES-skeleton algorithm of \cite{Nandy}, based on the skeleton of the output of GES with $\lambda = \log(n)/(2n)$. ARGES-skeleton based on an estimated CPDAG is a hybrid algorithm that operates on a restricted search space determined by the estimated CPDAG and an adaptive modification. The adaptive modification was proposed to retain the soundness and the consistency of GES and it can be easily checked that our soundness and consistency results continue to hold if we replace GES by ARGES-skeleton in each step of AGES with $\lambda > \log(n)/(2n)$. An additional advantage of using ARGES-skeleton is that it leads to a substantial improvement in the runtime of AGES.

\begin{figure}[tb]
\centering
\includegraphics[scale=0.25]{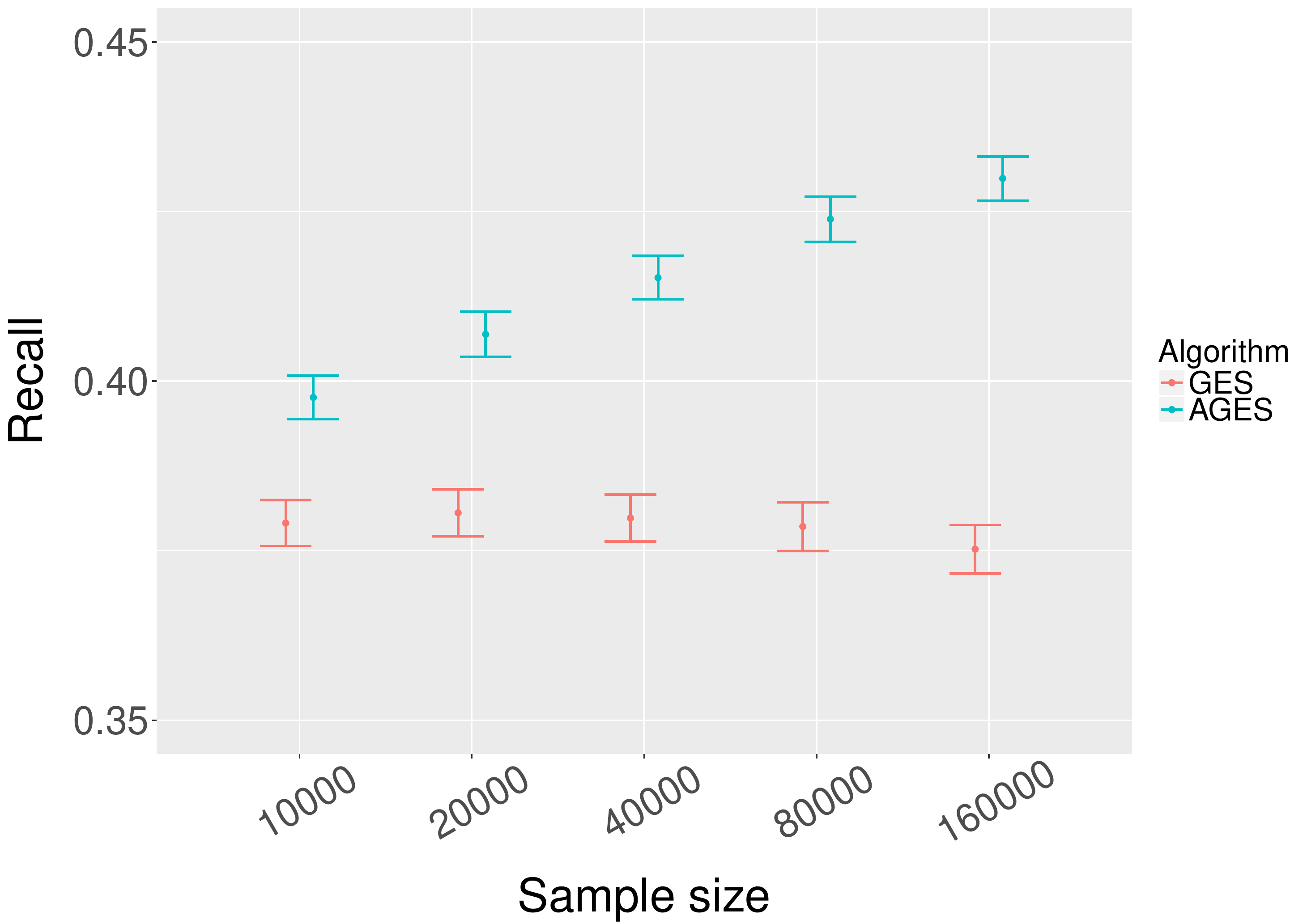}\\
\vspace{0.5cm}
\includegraphics[scale=0.25]{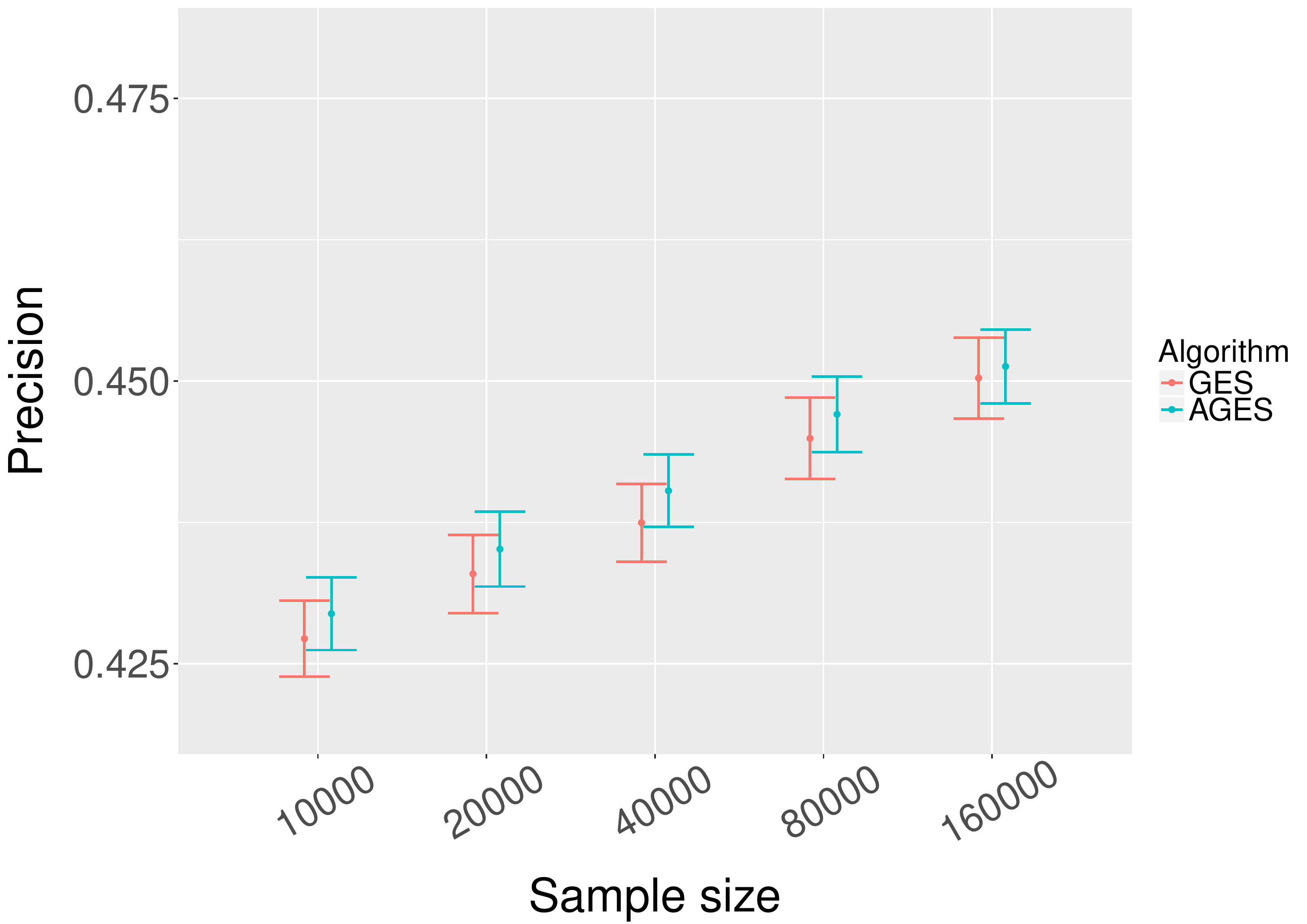}
\caption{Mean precision and recall of GES and AGES with ARGES-skeleton over 500 simulations for $(q_s,q_w) = (0.3,0.7)$, $p=100$, $\lambda=\log(n)/(2n)$, and varying sample sizes (see Section~\ref{Section: Simulations with p=100}). The bars in the plots correspond to $\pm$ twice the standard error of the mean.}
\label{fig: large p BIC}
\end{figure}

\begin{figure}[tb]
\centering
\includegraphics[scale=0.25]{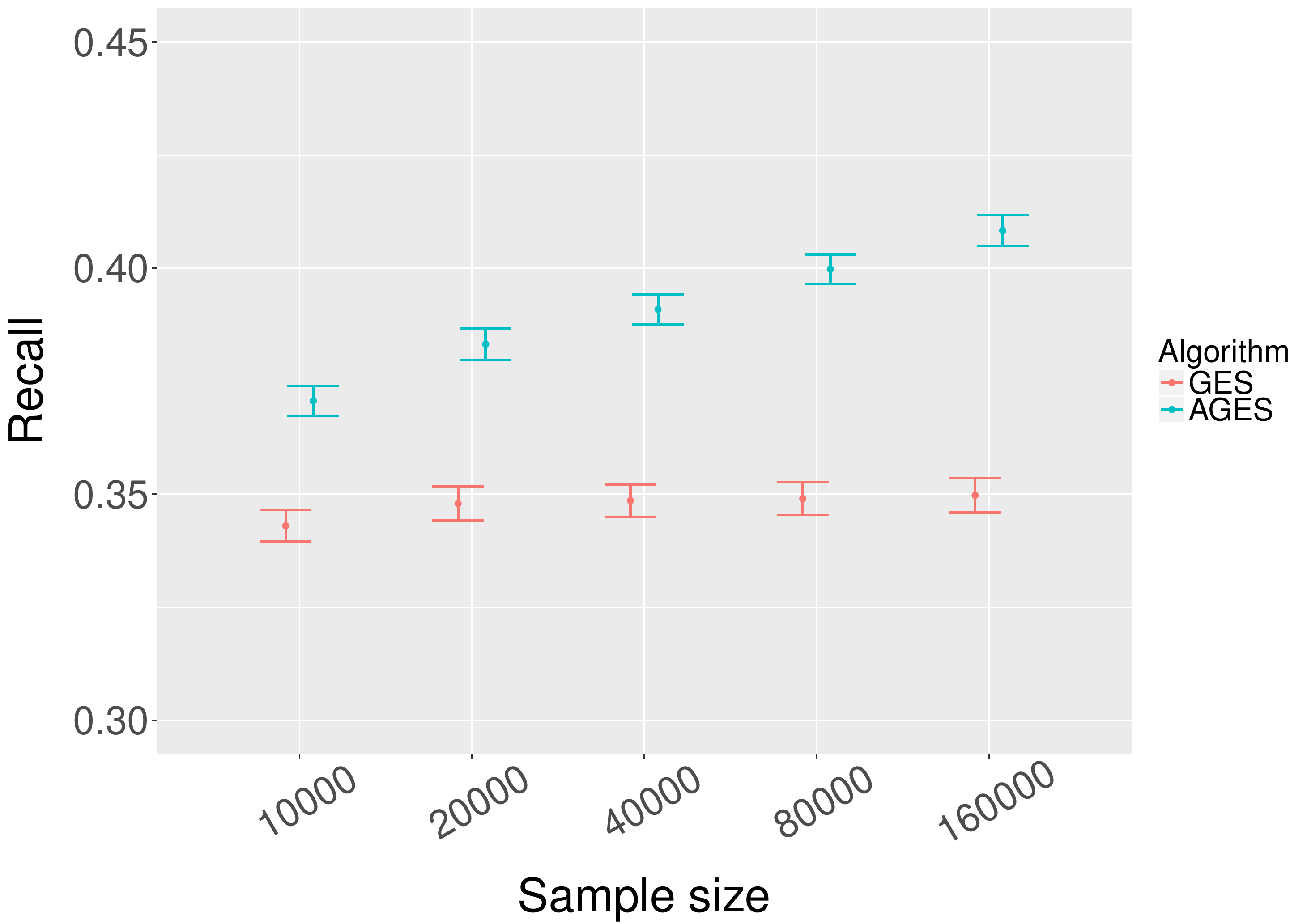}\\
\vspace{0.5cm}
\includegraphics[scale=0.25]{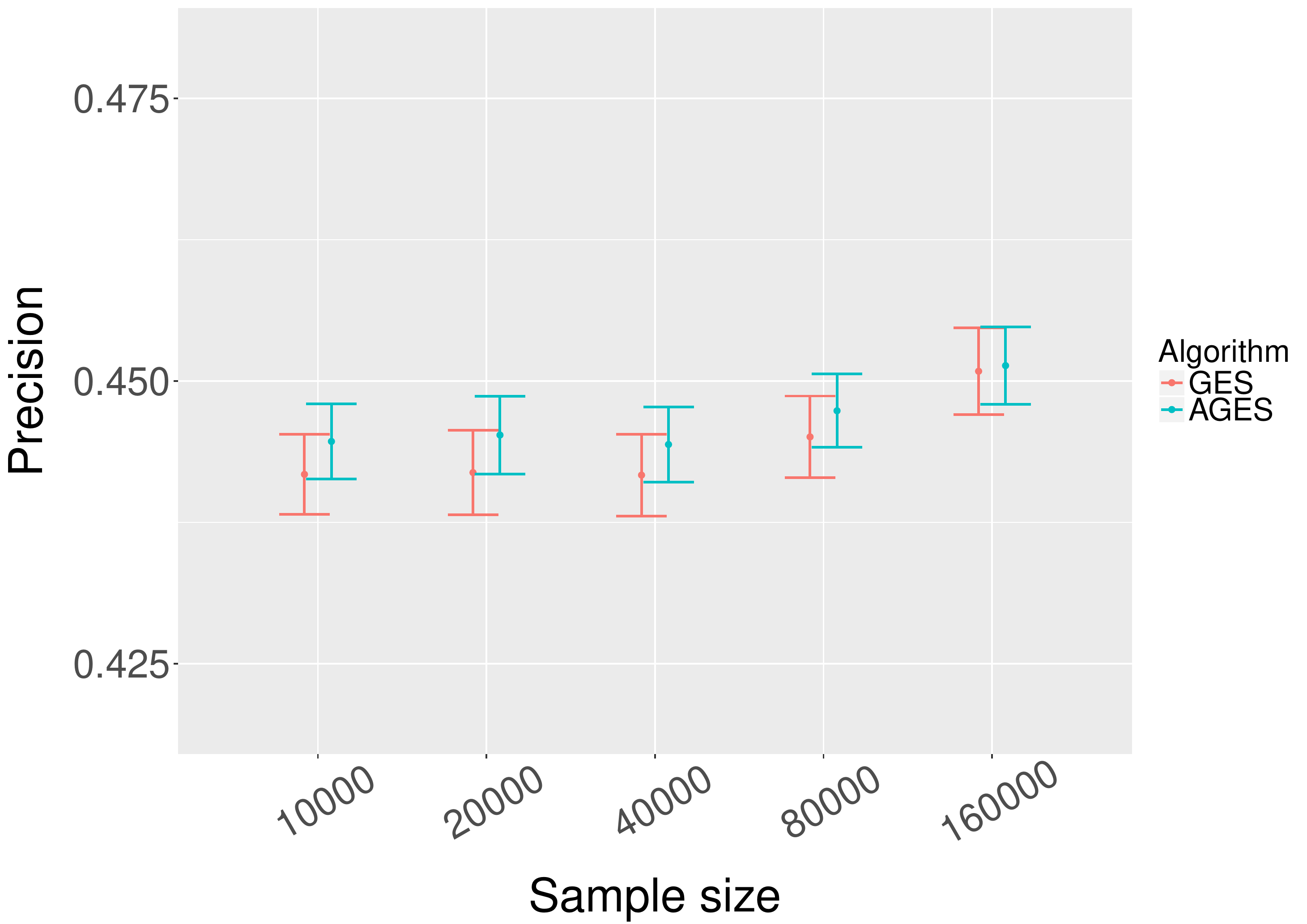}
\caption{Mean precision and recall of GES and AGES with ARGES-skeleton over 500 simulations for $(q_s,q_w) = (0.3,0.7)$, $p=100$, $\lambda=\log(n)/(2n) + \log(p)$, and varying sample sizes (see Section~\ref{Section: Simulations with p=100}). The bars in the plots correspond to $\pm$ twice the standard error of the mean.}
\label{fig: large p EBIC}
\end{figure}

Figure~\ref{fig: large p BIC} shows that AGES (based on ARGES-skeleton) achieves higher recall than GES for estimating the true directions while retaining a similar precision as GES. Unsurprisingly, the difference in the recalls of AGES and GES becomes more prominent for larger sample sizes. We obtain a similar relative performance by using the extended BIC penalty $\lambda = \log(n)/(2n) + \log(p)$ \citep[e.g.,][]{FoygelDrton2010} instead of the BIC penalty (Figure~\ref{fig: large p EBIC}).



\section{PATH STRONG FAITHFULNESS}\label{Section: path strong faithful}
To produce Figure~\ref{Figure dependence on edge weights} of the main paper we started by determining the possible APDAGs $A_0$ for each choice of the edge weights. This is done by considering the four steps in Section~\ref{Section: New target} of the main paper. In Step~\ref{Step 1} we can obtain many CPDAGs (3 with one edge, 6 with two edges, and 1 with three edges). However, once we proceed to Step~\ref{Step 2}, we note that only one DAG contains a v-structure. Hence, the orientations in the CPDAGs in Step~\ref{Step 3} are limited to this v-structure. Therefore, the only two possible APDAGs are given in Figures~\ref{Subfigure Informative APDAG} and \ref{Subfigure Uninformative APDAG} of the main paper.

Now we consider possible outputs of the oracle version of AGES for every choice of the edge weights. To compute them we have to compute all marginal and partial correlations. Then, we select the in absolute value largest marginal correlation. This corresponds to the first edge addition. Now, we consider the four remaining marginal and partial correlations between non-adjacent vertices. If the in absolute value largest partial correlation is actually a marginal correlation, then we do not obtain a v-structure and the output of AGES is Figure~\ref{Subfigure Uninformative APDAG} of the main paper. Otherwise, AGES recovers an APDAG with a v-structure.

With these results, we can compute the different areas depicted in Figure~\ref{Figure dependence on edge weights} of the main paper.

\section{AGES $\delta$-STRONG FAITHFULNESS}\label{Subsection: Strong Faithfulness}

The path strong faithfulness assumption in Theorem~\ref{Theorem: Equality of the CPDAGs} of the main paper is sufficient but not necessary for the theorem.

We now present an alternative strong faithfulness assumption which is weaker than path strong faithfulness. This new assumption is necessary and sufficient for Theorem~\ref{Theorem: Equality of the CPDAGs} of the main paper.

In a CPDAG $C=(V,E)$ we say that $S \subseteq V\setminus \{X_i\}$ is a \emph{possible parent set} of $X_i$ in $C$ if there is a DAG $G$ in the Markov equivalence class represented by $C$ such that $\Pa_G(X_i)=S$.

\begin{definition}\label{def: AGES strong faithful}
  A multivariate Gaussian distribution is said to be \emph{AGES $\delta$-strong faithful} with respect to a DAG $G$ if it holds that $X_{i} \ncid_{G} X_{j} \vert S \Rightarrow \vert \rho_{X_{i},X_{j}\vert S} \vert > \delta$ for every triple $(X_i,X_j,S)$ belonging to at least one of the following two sets:
  \begin{enumerate}
  \item Consider the output of the forward phase of oracle GES with penalty parameter $\lambda=-1/2\log(1-\delta^2)$. The first set consists of all triples $(X_{i},X_{j}, S)$ such that, in this forward phase output, $S$ is a possible parent set of $X_j$, $X_i$ is a non-descendant of $X_j$ in the DAG used to define $S$, and $X_i$ and $X_j$ are not adjacent.
  \item Consider the backward phase of oracle GES when ran with penalty parameter $\lambda$ and starting from the output of the forward phase. The second set consists of all triples $(X_{i},X_{j}, S)$ such that the edge between $X_i$ and $X_j$ has been deleted during the backward phase using $S$ as conditioning set.
  \end{enumerate}
\end{definition}

The need for a different condition becomes clear when we think about how GES operates.
In Example~\ref{Example: GES-SF}, we show why path strong faithfulness is too strong.

\begin{example}\label{Example: GES-SF}
   Consider the distribution $f$ generated from the weighted DAG $G_0$ in Figure \ref{Subfigure: GES SF true DAG} with $\varepsilon\sim N(0,I)$. The solution path of oracle GES is shown in Figures \ref{Subfigure: GES SF 4 edges}-\ref{Subfigure: GES SF 1 edge}. Note that all sub-CPDAGs found by oracle GES coincide with the CPDAGs constructed as described in Step~\ref{Step 3} of the main paper, i.e., $C_i = \tilde{C}_i$ for $0\leqslant i \leqslant 3$.

   Intuitively, we would like a condition that is satisfied if and only if the two CPDAGs coincide. However, this is not necessarily the case for the path strong faithfulness condition.

   For the CPDAG in Figure~\ref{Subfigure: GES SF 1 edge}, path strong faithfulness imposes $\delta_3$-strong faithfulness with respect to $C_3$, i.e., $\vert \rho_{X_3,X_4\vert S} \vert > \delta_3$ for all sets $S$ not containing $X_3$ or $X_4$. However, the forward phase of GES only checks the marginal correlation between $X_3$ and $X_4$. The same is true for the backward phase.


   Consider now Figure~\ref{Subfigure: GES SF 2 edges}. Path strong faithfulness imposes $\delta_2$-strong faithfulness with respect to $C_2$. For instance, it requires that $\vert \rho_{X_2,X_4 \vert X_1} \vert > \delta_2$. However, this partial correlation does not correspond to a possible edge addition. Hence, this constraint is not needed, and it is not imposed by AGES $\delta$-strong faithfulness.
   
   In this example, $\vert \rho_{X_2,X_4 \vert X_1} \vert < \delta_2 = \vert \rho_{X_1,X_3\vert X_2} \vert$. Hence, $f$ does not satisfy $\delta_2$-strong faithfulness with respect to $C_2$, but it does satisfy AGES $\delta_2$-strong faithfulness with respect to $C_2$. 
\end{example}

\begin{figure}[tb]
\centering
\subfloat[True DAG $G_{0}$.]{
\label{Subfigure: GES SF true DAG}
\begin{tikzpicture}[scale=0.7]
\node[node] (X) at (-1.2,1) {$X_1$};
\node[node] (Y) at (1.2,1) {$X_2$};
\node[node] (Z) at (1.2,-1) {$X_3$};
\node[node] (W) at (-1.2,-1) {$X_4$};
\draw[->] (X) to [above] node{$0.1$} (Y);
\draw[->] (X) to [above right]  node{$1$} (Z);
\draw[->] (Y) to [right] node{$1$} (Z);
\draw[->] (Z) to [above] node{$1$} (W);
\end{tikzpicture}
}
\hfill
\subfloat[CPDAG $C_{0}=\GES_{\lambda_{0}}(f).$]{
\label{Subfigure: GES SF 4 edges}
\begin{tikzpicture}[scale=0.7]
\node[node] (X) at (-1.2,1) {$X_1$};
\node[node] (Y) at (1.2,1) {$X_2$};
\node[node] (Z) at (1.2,-1) {$X_3$};
\node[node] (W) at (-1.2,-1) {$X_4$};
\draw[-] (X) to (Y);
\draw[-] (X) to (Z);
\draw[-] (Y) to (Z);
\draw[-] (Z) to (W);
\end{tikzpicture}
}
\hfill
\subfloat[CPDAG $C_{1}=\GES_{\lambda_{1}}(f)$.]{
\label{Subfigure: GES SF 3 edges}
\begin{tikzpicture}[scale=0.7]
\node[node] (X) at (-1.2,1) {$X_1$};
\node[node] (Y) at (1.2,1) {$X_2$};
\node[node] (Z) at (1.2,-1) {$X_3$};
\node[node] (W) at (-1.2,-1) {$X_4$};
\draw[->] (X) to (Z);
\draw[->] (Y) to (Z);
\draw[->] (Z) to (W);
\end{tikzpicture}
}
\vfill
\hspace{0.2cm}
\hfill
\subfloat[CPDAG $C_{2}=\GES_{\lambda_{2}}(f)$.]{
\label{Subfigure: GES SF 2 edges}
\begin{tikzpicture}[scale=0.7]
\node[node] (X) at (-1.2,1) {$X_1$};
\node[node] (Y) at (1.2,1) {$X_2$};
\node[node] (Z) at (1.2,-1) {$X_3$};
\node[node] (W) at (-1.2,-1) {$X_4$};
\draw[-] (Y) to (Z);
\draw[-] (Z) to (W);
\end{tikzpicture}
}
\hfill
\subfloat[CPDAG $C_{3}=\GES_{\lambda_{3}}(f)$.]{
\label{Subfigure: GES SF 1 edge}
\begin{tikzpicture}[scale=0.7]
\node[node] (X) at (-1.2,1) {$X_1$};
\node[node] (Y) at (1.2,1) {$X_2$};
\node[node] (Z) at (1.2,-1) {$X_3$};
\node[node] (W) at (-1.2,-1) {$X_4$};
\draw[-] (Z) to (W);
\end{tikzpicture}
}
\hfill
\hspace{0.2cm}
\caption{Graphs corresponding to Example~\ref{Example: GES-SF}. Figure~\ref{Subfigure: GES SF true DAG} shows the true underlying DAG $G_0$. Figure~\ref{Subfigure: GES SF 4 edges} - \ref{Subfigure: GES SF 1 edge} show the sub-CPDAG oracle GES found. }
\label{Figure: GES-delta-SF}
\end{figure}

The following lemma states that the AGES $\delta$-strong faithfulness assumption is necessary and sufficient for Claim~1 and Claim~2 in the proof of Theorem~\ref{Theorem: Equality of the CPDAGs} of the main paper.

\begin{lemma}\label{Lemma GES SF}
Given a multivariate Gaussian distribution $f$ and a CPDAG $C$ on the same set of vertices, $\GES(f,\lambda)$ with $\lambda=-1/2\log(1-\delta^2)$ is an I-map of $C$ if and only if $f$ is AGES $\delta$-strong faithful with respect to $C$.
\end{lemma}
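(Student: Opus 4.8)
The plan is to reduce the quantitative strong-faithfulness condition to a purely qualitative d-separation condition, using a single observation about the penalty. With $\lambda = -\tfrac12\log(1-\delta^2)$, Equation~\eqref{Equation: Preetam} (i.e.\ Lemma~\ref{Lemma: 5.1 Preetam}) shows that an edge addition strictly improves the GES score iff the relevant $|\rho_{X_i,X_j\mid S}| > \delta$, and a deletion improves it iff $|\rho_{X_i,X_j\mid S}| < \delta$. Consequently, \emph{every} triple $(X_i,X_j,S)$ occurring in Definition~\ref{def: AGES strong faithful} already satisfies $|\rho_{X_i,X_j\mid S}| \le \delta$: for the first set because the forward phase has terminated (no addition improves the score), and for the second set because the edge was in fact deleted (so $|\rho| < \delta$). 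Hence, on exactly these triples, the implication ``$X_i \ncid_{C} X_j \mid S \Rightarrow |\rho_{X_i,X_j\mid S}| > \delta$'' is equivalent to the pure separation statement $X_i \cid_{C} X_j \mid S$. So I would recast the lemma as: the collection of these separation statements holds iff $\GES(f,\lambda)$ is an I-map of $C$.

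Writing $C^f$ and $C^b = \GES(f,\lambda)$ for the forward- and backward-phase outputs, I would first handle the forward phase. The first set of triples is precisely the set of candidate edge additions at $C^f$, where $S$ ranges over the possible parent sets $\Pa_{G}(X_j)$ across DAGs $G$ in the equivalence class of $C^f$, with $X_i$ a non-descendant non-adjacent vertex. By the ordered local Markov characterization of I-maps (valid since d-separation is a compositional graphoid), $C^f$ is an I-map of $C$ iff $X_i \cid_{C} X_j \mid \Pa_{G}(X_j)$ for all such triples. Combined with the first paragraph, this gives the equivalence: the first set of the AGES condition holds iff $C^f$ is an I-map of $C$. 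This is exactly Claim~1 in the proof of Theorem~\ref{Theorem: Equality of the CPDAGs}, now sharpened to an equivalence by appealing only to the candidate-addition partial correlations.

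For the backward phase I would lean on monotonicity. Each GES deletion is realized, in suitable DAG representatives, by removing a single edge $X_i \to X_j$, so $\mathcal I(C^{(t)}) \subseteq \mathcal I(C^{(t+1)})$ at every step and hence $\mathcal I(C^f) \subseteq \mathcal I(C^b)$. For sufficiency, assuming both sets hold, $C^f$ is an I-map by the previous paragraph; were $C^b$ not an I-map, there would be a first deletion $G \to G' = G\setminus\{X_i\to X_j\}$ turning an I-map into a non-I-map, and Lemma~\ref{Lemma: deleting an important edge} (applied with any DAG in the class of $C$) would give $X_i \ncid_{C} X_j \mid S$ for the conditioning set $S = \Pa_{G'}(X_j)$ of that deletion, contradicting the second set since $|\rho_{X_i,X_j\mid S}| < \delta$. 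For necessity, if $C^b$ is an I-map then $\mathcal I(C^f) \subseteq \mathcal I(C^b) \subseteq \mathcal I(C)$ shows $C^f$ is an I-map, so the first set holds; and for any performed deletion $G \to G'$ with set $S = \Pa_{G'}(X_j)$ we have $X_i \cid_{G'} X_j \mid S$ by local Markov in $G'$ (there $X_i$ is a non-adjacent non-descendant of $X_j$), which $\mathcal I(G') = \mathcal I(C^{(t+1)}) \subseteq \mathcal I(C^b) \subseteq \mathcal I(C)$ transfers to $X_i \cid_{C} X_j \mid S$, giving the second set.

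The main obstacle I expect is the bookkeeping of the CPDAG-to-DAG correspondence: one must justify that each backward move is a single-edge DAG deletion whose score uses the conditioning set $S = \Pa_{G'}(X_j)$ (so that the ``$S$'' of Definition~\ref{def: AGES strong faithful} matches Lemma~\ref{Lemma: 5.1 Preetam}), that per-step independence monotonicity propagates along the whole backward phase, and that Lemma~\ref{Lemma: deleting an important edge} localizes any I-map failure exactly at the deleted pair with this same $S$. The remaining delicate point is the local-to-global Markov equivalence applied to CPDAGs with their possible parent sets, which is standard but should be invoked explicitly.
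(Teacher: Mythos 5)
Your proof is correct and follows essentially the same route as the paper's: the same forward/backward decomposition, the same identification of the first set of triples with the (ordered local Markov) candidate additions at $C^f$, and the same appeal to Lemma~\ref{Lemma: deleting an important edge} for the backward phase. You merely make explicit two facts the paper uses implicitly --- that every triple in Definition~\ref{def: AGES strong faithful} automatically has $|\rho|\leqslant\delta$ (so the condition reduces to pure d-separation statements) and that $\mathcal I(C^f)\subseteq\mathcal I(C^b)$ by monotonicity of deletions --- which is a welcome clarification but not a different argument.
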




\begin{proof}
For a CPDAG $C$, we use the notation $X_i \cid_C X_j \vert S$ to denote that $X_i \cid_G X_j \vert S$ in any DAG $G$ in the Markov equivalence class described by $C$.

We first prove the ``if" part. Thus, assume that $f$ is AGES $\delta$-strong faithful with respect to $C$. We consider running oracle GES with $\lambda=-1/2 \log(1-\delta^2)$, and denote by $C^{f}$ and $C^{b}$ the output of the forward and backward phase, respectively.

\textit{Claim 1:} $C^{f}$ is an I-map of $C$, i.e., all d-separation constraints true in $C^{f}$ are also true in $C$.

\textit{Proof of Claim 1:}

For each triple $(X_i,X_j,S)$ contained in the first set of Definition~\ref{def: AGES strong faithful}, we have $\vert \rho_{X_i,X_j \vert S} \vert < \delta$, since otherwise there would have been another edge addition. From AGES $\delta$-strong faithfulness, it follows that $X_i \cid_C X_j \vert S$. Since this set of triples characterizes the d-separations that hold in $C^f$, all d-separations that hold in $C^f$ also hold in $C$.

\textit{Claim 2:} $C^{b}$ is an I-map of $C$, i.e., all d-separation constraints true in $C^{b}$ are also true in $C$.

\textit{Proof of Claim 2:}
By Claim 1 the backward phase starts with an I-map of $C$. Suppose it ends with a CPDAG that is not an I-map of $C$. Then, at some point there is an edge deletion which turns a DAG $G$ that is an I-map of $C$ into a DAG $G^{\prime}$ that is no longer an I-map of $C$. Suppose the deleted edge is $(X_{i},X_{j})$. By Lemma~\ref{Lemma: deleting an important edge}, we have $X_{i} \ncid_{C} X_{j} \vert \Pa_{G^{\prime}}(X_{j})$.
Since the edge has been deleted, the corresponding triple $(X_i,X_j,\Pa_{G^{\prime}}(X_{j}))$ is contained in the second set of Definition~\ref{def: AGES strong faithful}. Hence, by AGES $\delta$-strong faithfulness, we obtain $\vert \rho_{X_{i},X_{j} \vert \Pa_{G'}(X_{j})} \vert > \delta.$ Thus, deleting this edge would worsen the score. This is a contradiction to the GES algorithm deleting this edge.

We now prove the ``only if'' part. Thus, suppose there is a triple $(X_i,X_j,S)$ in one of the sets in Definition~\ref{def: AGES strong faithful} such that $\vert \rho_{X_{i},X_{j} \vert S} \vert < \delta$ and $X_{i} \ncid_{C} X_{j} \vert S$. 

Suppose first that this triple concerns the first set. Since all triples in the first set characterize the d-separations that hold in $C^f$, we know that $X_i \cid_{C^f} X_j \vert S$. Therefore, $C^f$ is not an I-map of $C$. Hence, $C^b$ is certainly not an I-map of $C$.

Next, suppose the triple concerns the second set. This means that at some point there is an edge deletion which turns a DAG $G$ into a DAG $G^{\prime}$ by deleting the edge $X_i \to X_j$, using $S$ as conditioning set. This means that $S = \Pa_G(X_j) \setminus\{X_i\} = \Pa_{G'}(X_j)$. In the resulting DAG $G'$, $X_i$ and $X_j$ are therefore d-separated given $S$. But we know that $X_i \ncid_{C} X_j \vert S$. Hence, $C^b$ is not an I-map of $C$.
\end{proof}

We analysed how often the AGES $\delta$-strong faithfulness assumption is met in the simulations presented in the main paper, as well as how often oracle AGES is able to find the correct APDAG. Lemma~\ref{Lemma GES SF} provides a necessary and sufficient condition for the equality of the CPDAGs of Theorem~\ref{Theorem: Equality of the CPDAGs} of the main paper. For the equality of the APDAGs this condition is only sufficient.

\begin{figure}[tb]

\centering
\includegraphics[trim={32cm 0 32cm 0},scale=0.55]{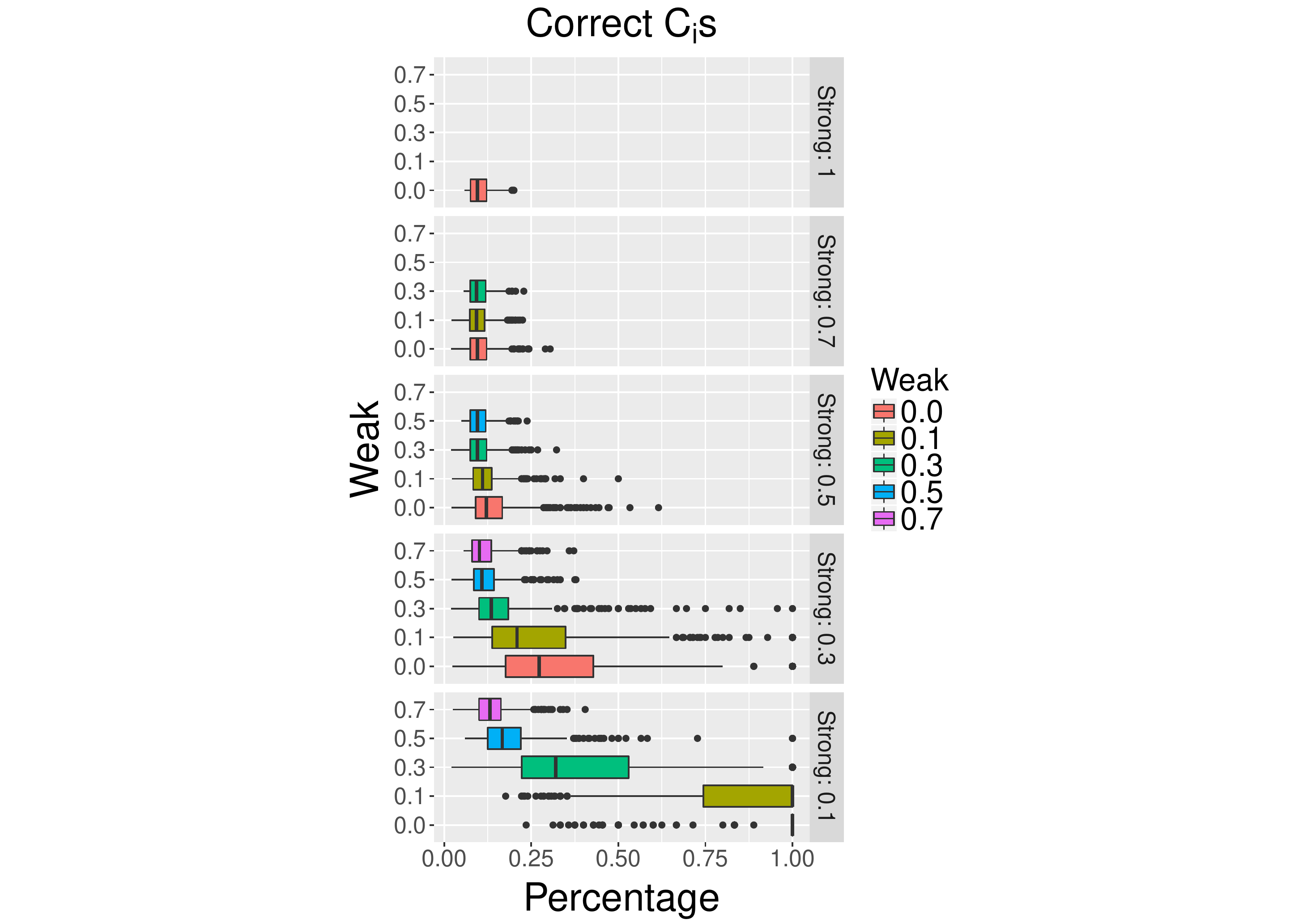}

\caption{Boxplots of the proportion of correct sub-CPDAGs $\tilde{C}_1, \ldots, \tilde{C}_k$ (as defined in Step~\ref{Step 3} of Section~\ref{Section: New target} of the main paper) found by oracle AGES in each solution path (see Section \ref{Subsection: Strong Faithfulness}). The different colors represent the different proportions of weak edges. The plots are grouped by the proportion of strong edges.}
\label{Figure: Boxplots Correct C_is}
\end{figure}

\begin{figure}[tb]

\centering
\includegraphics[trim={32cm 0 32cm 0},scale=0.55]{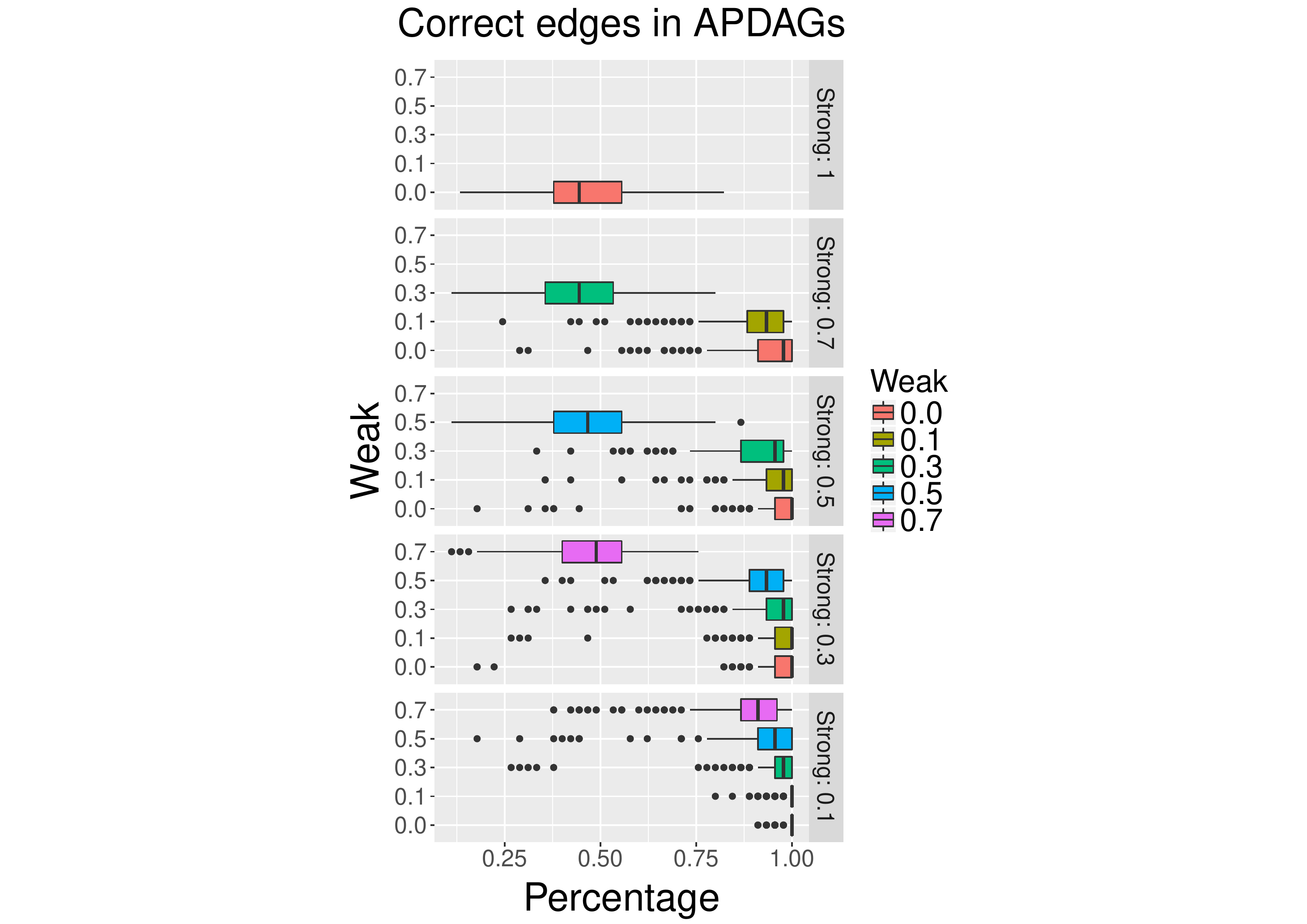}

\caption{Boxplots of the proportion of edge orientations in the APDAGs found by oracle AGES that are equal to the edge orientations in the true APDAGs (see Section \ref{Subsection: Strong Faithfulness}). The different colors represent the different proportions of weak edges. The plots are grouped by the proportion of strong edges.}
\label{Figure: Boxplots APDAGs}
\end{figure}

Figure~\ref{Figure: Boxplots Correct C_is} shows the proportion of correct sub-CPDAGs $\tilde{C}_1, \ldots, \tilde{C}_k$ (as defined in Step~\ref{Step 3} of Section~\ref{Section: New target} of the main paper) found by oracle AGES in each solution path and for all simulated settings. We can see that the sparsity of the true underlying DAG plays an important role in the satisfiability of the assumption. We can also see that for the same total sparsity, the settings with more weak edges produce better results.

Even though the AGES $\delta$-strong faithfulness assumption is not very often satisfied for denser graphs, it is much weaker than the classical $\delta$-strong faithfulness assumption. Indeed, we verified that the $\delta$-strong faithfulness assumption is rarely satisfied even for single sub-CPDAGs $C_i$.

Figure~\ref{Figure: Boxplots APDAGs} shows the proportion of edge orientations in the APDAGs found by oracle AGES that are equal to the edge orientations in the true APDAGs. With equal edge orientations, we mean that the edges have to be exactly equal. For example, an edge that is oriented in the APDAG found by oracle AGES, but oriented the other way around or unoriented in the true APDAG counts as an error. We see that in many settings AGES can correctly find a large proportion of the edge orientations. 


\section{APPLICATION TO DATA FROM SACHS ET AL., 2005}\label{Section: Real data suppmat}

We log-transformed the data because they were heavily right skewed. 
Based on the network provided in Figure~2 of \cite{Sachs}, we produced the DAG depicted in Figure~\ref{Figure: PDAG real data} that we used as partial ground truth. In the presented network, only two variables are connected by a bi-directed edge, meaning that there is a feedback loop between them. To be more conservative, we omitted this edge.

For the comparison of GES and AGES we need to account for the interventions done in the 14 experimental conditions. Following \cite{Mooij}, we distinguish between an intervention that changes the abundance of a molecule and an intervention that changes the activity of a molecule. Interventions that change the abundance of a molecule can be treated as do-interventions \citep{Pearl2009}, i.e., we delete the edges between the variable and its parents. Activity interventions, however, change the relationship with the children, but the causal connection remains. For this reason, we do not delete edges for such interventions. We also do not distinguish between an activation and an inhibition of a molecule. All this information is provided in Table 1 of \cite{Sachs}.

The only abundance intervention done in the six experimental conditions we consider in Table 1 of the main paper is experimental condition 5. This intervention concerns $PIP2$. For this reason, when comparing the outputs of GES and AGES we need to consider the DAG in Figure~\ref{Figure: PDAG real data} with the edge $PLC_{\gamma} \rightarrow PIP2$ deleted. For the other five experimental conditions we used the DAG depicted in Figure~\ref{Figure: PDAG real data} as ground truth.
\begin{figure}[tb]
\begin{center}
\begin{tikzpicture}[scale=1.2]
\node[node] (X1) at (0,3) {$PLC_{\gamma}$};
\node[node] (X2) at (2,3) {$PIP3$};
\node[node] (X3) at (-2,3) {$PIP2$};
\node[node] (X4) at (-2,1) {$PKC$};
\node[node] (X5) at (2,1) {$Akt$};
\node[node] (X6) at (2,0) {$PKA$};
\node[node] (X7) at (0,1) {$JNK$};
\node[node] (X8) at (-1,0) {$Raf$};
\node[node] (X9) at (1,2) {$p38$};
\node[node] (X10) at (-2,-1) {$Mek1/2$};
\node[node] (X11) at (2,-1) {$Erk1/2$};
\draw[thick,->] (X1) to  (X3);
\draw[thick,->] (X1) to  (X4);
\draw[thick,->] (X2) to  (X1);
\draw[thick,->] (X2) to  (X5);
\draw[thick,->] (X3) to  (X4);
\draw[thick,->] (X4) to  (X8);
\draw[thick,->] (X4) to  (X9);
\draw[thick,->] (X4) to  (X7);
\draw[thick,->] (X4) to  (X10);
\draw[thick,->] (X6) to  (X7);
\draw[thick,->] (X6) to  (X8);
\draw[thick,->] (X6) to  (X9);
\draw[thick,->] (X6) to  (X11);
\draw[thick,->] (X6) to  (X5);
\draw[thick,->] (X8) to  (X10);
\draw[thick,->] (X10) to  (X11);
\end{tikzpicture}
\end{center}
\caption{The DAG used as partial ground truth derived from the conventionally accepted network \citep{Sachs}.}
\label{Figure: PDAG real data}
\end{figure}
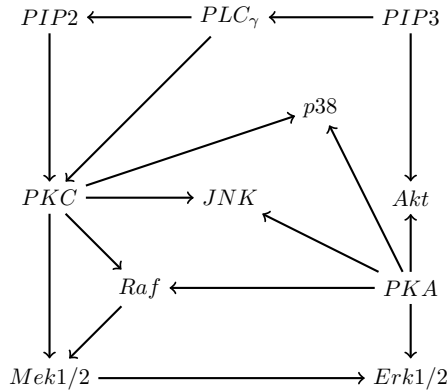

\end{document}